\documentclass[pdflatex,sn-mathphys-num]{sn-jnl}

\usepackage{amsmath,amssymb,amsthm}
\usepackage{bm}
\usepackage{graphicx}
\graphicspath{{figures/}}
\usepackage{booktabs}
\usepackage{float}
\usepackage{placeins}
\usepackage{enumitem}
\usepackage{tabularx}
\usepackage{xcolor}
\usepackage{tikz}
\usetikzlibrary{arrows.meta,positioning,shapes.geometric}
\usepackage[nameinlink,noabbrev]{cleveref}

\theoremstyle{plain}
\newtheorem{theorem}{Theorem}
\newtheorem{proposition}{Proposition}
\newtheorem{lemma}{Lemma}

\theoremstyle{definition}

\newtheorem{assumption}{Assumption}
\theoremstyle{remark}
\newtheorem{remark}{Remark}

\newcommand{\R}{\mathbb{R}}
\newcommand{\E}{\mathbb{E}}

\newcommand{\ip}[2]{\left\langle #1,#2 \right\rangle}

\newcommand{\VWK}{\textrm{VWK}}
\newcommand{\He}{\mathrm{He}}
\newcommand{\Var}{\operatorname{Var}}
\newcommand{\Span}{\operatorname{span}}

\begin{document}

\title[Skew Penalty for Volterra Cross-Correlation Identification]{A Closed-Form Skew Penalty for Volterra Cross-Correlation Identification under Non-Gaussian Input}

\author[1,2,3]{\fnm{Serhii} \sur{Zabolotnii}}\email{zabolotnii.serhii@csbc.edu.ua}

\affil[1]{\orgname{Cherkasy State Business College}, \city{Cherkasy}, \postcode{18028}, \country{Ukraine}}
\affil[2]{\orgname{State Scientific Research Institute of Armament and Military Equipment Testing and Certification}, \city{Cherkasy}, \country{Ukraine}}
\affil[3]{\orgname{Uzhhorod National University}, \city{Uzhhorod}, \country{Ukraine}}

\abstract{The Lee--Schetzen cross-correlation method identifies finite-memory Volterra kernels by projecting onto a Wiener--Hermite basis matched to Gaussian white input; off that input the diagonal estimator is biased, but the size of the bias has not been given in closed form. This paper's main result is an exact order-2 misspecification-penalty theorem: expressed in the variance-matched Gaussian basis, the cross-correlation estimator of a finite-memory Volterra functional incurs an excess $L^2(P)$ risk governed entirely by the input skew coefficient $\delta=\mu_3/\sigma^2$, and this penalty vanishes if and only if the input law is symmetric. For a centered-exponential input the classical estimator carries $32\times$ the risk of the estimator matched to the input law. The remedy is to identify the kernels in the coordinate system that is orthonormal for the actual input law $P$: the distribution-matched Volterra--Wiener--Kunchenko (\VWK{}) basis, built by oriented Gram--Schmidt on the monomials in $L^2(P)$ and available in closed form from empirical moments. This basis is the identification-from-data specialization of arbitrary polynomial chaos \citep{Xiu2002,OladyshkinNowak2012}; its Gaussian row recovers the Wiener--Hermite basis and its discrete and bounded rows recover the Askey families as closed-form checks. A conditioning analysis shows that the monomial Gram condition number exceeds $10^6$ by degree five and $10^9$ by degree six, while the matched Gram operator is the identity at every order; a machine-checked Lean~4 proof establishes the Binomial$(N,p)$ Krawtchouk row for arbitrary $N$. The analysis is moment-based, finite-memory, and restricted to product input laws.}

\keywords{Volterra series, Nonlinear system identification, Wiener--Hermite expansion, Cross-correlation kernel estimation, Non-Gaussian input, Orthogonal polynomials, Distribution matching}

\maketitle

\section{Introduction}
\label{sec:introduction}

Finite-order Volterra series are a classical coordinate system for nonlinear
systems, representing the output as a polynomial functional of past inputs
\citep{Volterra1930,Schetzen1980,Marmarelis2004}; in a discrete finite-memory
setting this is a polynomial in a lag vector
$(x_t,x_{t-1},\ldots,x_{t-d+1})$. The algebra is elementary, but the geometry is
not: monomial features are correlated under the input law, so coefficient
identification depends on the inner product that law induces.

Wiener's construction resolves this in one setting. For Gaussian white input,
monomial Volterra functionals orthogonalize into Hermite functionals, which
yields the Wiener--Hermite expansion and a practical cross-correlation kernel
estimator \citep{Wiener1958,LeeSchetzen1965}. The limitation is not orthogonality
but the fixed Gaussian geometry. Real input laws may be skewed, discrete, or
platykurtic, and then the natural coordinate system is orthogonal in $L^2(P)$ for
the actual input law $P$, not in $L^2(N(0,\sigma^2))$.

This paper takes the matched coordinate system---the orthonormal polynomial basis
generated by $P$ in $L^2(P)$---as a means of identifying finite-memory Volterra
kernels from data, where $P$ enters only through empirical moments. Three results
are new, and we state them first.
\begin{enumerate}[leftmargin=*]
\item A misspecification-penalty theorem (\cref{prop:penalty}) that quantifies, in
closed form, the excess risk of the variance-matched Gaussian/Wiener
cross-correlation estimator under an asymmetric input law, governed by the skew
coefficient $\delta=\mu_3/\sigma^2$.
\item A finite-memory Volterra-kernel identification framing---the coefficient
bijection between monomial and matched coordinates (\cref{thm:mapping}) and the
conditioning advantage of the matched basis over the raw power basis
(\cref{subsec:conditioning})---that specializes arbitrary Polynomial Chaos
\citep{OladyshkinNowak2012,Xiu2002} to identification from data with empirical
moments.
\item A machine-checked Lean~4 proof (\cref{prop:krawtchouk}) of the
Binomial$(N,p)\to$Krawtchouk orthogonality for arbitrary $N$, resting on a third
Bernstein factorial-moment lemma absent from Mathlib.
\end{enumerate}

The matched basis itself is not new: the moment $\to$ oriented Gram--Schmidt
$\to$ distribution-matched orthonormal construction is arbitrary Polynomial Chaos
\citep{OladyshkinNowak2012}. Its classical reductions are recovery checks rather
than contributions, the Gaussian row recovering the Wiener--Hermite basis
\citep{Wiener1958,CameronMartin1947} and the discrete and bounded laws recovering
the Askey families \citep{Xiu2002}.

The analysis is deliberately moment-based and assumes a finite-memory product
input law $P^d$; it does not treat continuous-time integral kernels, correlated
or dependent lag laws, or a characteristic-function / moment-free formulation
for inputs without finite moments. The last is deferred to the next paper.

\section{Related work}
\label{sec:related}

\paragraph{Volterra and Wiener--Hermite identification.}
Volterra series provide a functional expansion for nonlinear systems
\citep{Volterra1930}. Wiener specialized the orthogonalization problem to
Gaussian random input, obtaining the Hermite-functional representation
\citep{Wiener1958}. Lee and Schetzen gave a practical cross-correlation procedure
for estimating Wiener kernels \citep{LeeSchetzen1965}, and later treatments
developed the theory and applications of Volterra/Wiener models
\citep{Schetzen1980,Marmarelis2004}. VWK keeps the same orthogonal-projection
motivation but replaces the Gaussian inner product by $L^2(P)$.

\paragraph{Orthogonal-polynomial and orthogonal Volterra identification.}
Removing the white-Gaussian-input requirement of Wiener--Hermite identification is not
new in itself: the fast orthogonal algorithm orthogonalizes the candidate terms on the
sample, so that arbitrary, including non-Gaussian, inputs are admissible without an
analytic weight \citep{Korenberg1988}. Orthonormal Laguerre and Kautz expansions instead
orthogonalize the memory axis, compressing long Volterra kernels into a few dynamic basis
functions \citep{Campello2004}. The recoverability and conditioning of Volterra and
polynomial regression under general, non-white inputs have been analyzed directly
\citep{KekatosGiannakis2011}, and input-distribution-matched orthogonal polynomial
nonlinear filters have been developed for adaptive filtering
\citep{Carini2015,RobustOrthBasis2014}, with a modern survey in \citep{Cheng2017}.
The demarcation from this line is at the level of the \emph{result}, not the motivation.
The distribution-matched adaptive filters of \citet{Carini2015,RobustOrthBasis2014} are built
precisely to remove input-mismatch, so the qualitative fact that a matched basis restores an
unbiased diagonal estimator is already implicit there; what is absent is a closed form for the
price of \emph{not} matching. The contribution here is that quantitative statement:
\cref{prop:penalty} gives the exact excess $L^2(P)$ risk of the mismatched diagonal
(cross-correlation) estimator in terms of the input moments $(\lambda,\rho,\delta)$, together with
the Askey reductions of the basis, the linear bijection between monomial Volterra coefficients
$h_\beta$ and VWK coordinates $a_\alpha$ on the shared index set $\mathcal{I}_{d,s}$, and the
conditioning theory that ties the orthogonalization to a quantified estimation advantage rather
than to numerical convenience alone.

\paragraph{Orthogonal polynomials and the Askey scheme.}
Classical orthogonal polynomial families can be characterized by the measures
with respect to which they are orthogonal. The Askey scheme organizes many of
these families and their limit relations \citep{Koekoek2010}. The VWK view is
not a new family competing with Askey polynomials. Rather, it is a
distribution-indexed construction whose closed forms coincide with the
appropriate classical family when such a family is available.

\paragraph{Generalized and arbitrary polynomial chaos.}
The distribution-matched orthonormal basis $\psi_0,\ldots,\psi_s$ of \cref{lem:vwk-basis}
is generalized polynomial chaos \citep{GhanemSpanos1991} in its Wiener--Askey form
\citep{Xiu2002}: each classical input law indexes the orthonormal polynomial family for
which it is the weight, and the Gaussian/Hermite, Binomial/Krawtchouk, Poisson/Charlier,
negative-binomial/Meixner, Gamma/Laguerre, and Beta/Jacobi rows are exactly that
correspondence. Constructing the basis numerically from the moment (Hankel) matrix $H_s$
for a law without a named weight is arbitrary polynomial chaos
\citep{OladyshkinNowak2012,SoizeGhanem2004,WanKarniadakis2006}, and its convergence and
regression-based estimation from data are well studied
\citep{Ernst2012,BlatmanSudret2011,Torre2019}. We therefore claim no novelty for the
basis itself, its uniqueness, or the Askey identifications, which we recover as
consistency checks. The relevant boundary for this paper is different: the gPC and
aPC literature usually starts from the matched basis, whereas the Volterra--Wiener
identification problem also requires explaining what the correlation estimator loses
when it is left in the mismatched Gaussian basis.

\paragraph{Higher-order statistics.}
Higher-order spectra use cumulants in the frequency domain for nonlinear signal
analysis and system identification \citep{Brillinger1965,Brillinger1975,Nikias1993}.
This tradition is closely related in its use of higher-order information, but it
has a different geometry. HOS identifies structure spectrally; VWK constructs
time-domain orthogonal coordinates in a finite polynomial space. The two views
are complementary rather than mutually exclusive.

\paragraph{Kunchenko stochastic-polynomial school.}
Kunchenko's stochastic-polynomial framework introduced polynomial approximation
in a space with a generating element and the normal system $F K=B$
\citep{Kunchenko2002,Kunchenko2006,Zabolotnii2026Survey}. Later PMM work developed low-order
non-Gaussian estimation for regression and time series
\citep{Zabolotnii2018,Zabolotnii2026}. In the present paper, this apparatus is
used as a distribution-matched orthogonalization principle for finite-memory
Volterra polynomials. The PMM connection is interpretive: low-order skew
corrections in the VWK basis explain why PMM2-like terms naturally arise in
asymmetric regimes, but this paper does not reduce all PMM estimators to VWK.

\paragraph{GMM and empirical characteristic functions.}
Generalized method of moments and empirical-characteristic-function estimation
provide another route from moment conditions to estimators
\citep{Hansen1982,FeuervergerMcDunnough1981,CarrascoFlorens2000,Yu2004}. This
paper does not use the CF/ECF machinery as a theorem. That connection requires
a separate characteristic-function construction in which finite raw moments are
not assumed.

\section{Mathematical setting}
\label{sec:setting}

Let $P$ be a probability law on $\R$, let $X\sim P$, and define
\[
  \ip{f}{g}_P = \E_P[f(X)g(X)].
\]
For a nonnegative integer $s$, let
\[
  \Pi_s=\Span\{1,x,\ldots,x^s\}\subset L^2(P).
\]
For memory length $d$, define the total-degree finite-memory polynomial space
\[
  \Pi_{d,s}
  =\Span\{x^\beta: \beta\in\mathbb{N}^d,\ |\beta|\le s\}
  \subset L^2(P^d),
\]
where $x^\beta=\prod_{r=1}^d x_r^{\beta_r}$ and
$|\beta|=\beta_1+\cdots+\beta_d$.

\begin{assumption}[Moment-based finite-memory regime]
\label{ass:moment}
Fix $s\ge 0$ and $d\ge 1$.
\begin{enumerate}[label=A\arabic*.,leftmargin=*]
\item Raw moments $m_r=\E[X^r]$ exist for $r=0,\ldots,2s$, with $m_0=1$.
\item $\Var(X)>0$.
\item The monomials $\{1,x,\ldots,x^s\}$ are linearly independent in $L^2(P)$.
\item For the finite-memory Volterra model, the input lag vector has product law
$P^d$.
\end{enumerate}
\end{assumption}

Equivalently, the Hankel moment matrix
$H_s=(m_{i+j})_{0\le i,j\le s}$ is positive definite. Indeed, for any
coefficient vector $c$,
\[
  c^\top H_s c = \E\left[\left(\sum_{i=0}^s c_i X^i\right)^2\right].
\]
For finite support of size $N+1$, the independence condition is equivalent to
$s\le N$.

\begin{table}[t]
\centering
\caption{Notation. The input law $P$ is centered, $\E[X]=0$, with $\sigma^2>0$.}
\label{tab:notation}
\begin{tabular}{@{}l p{0.60\columnwidth}@{}}
\toprule
Symbol & Meaning \\
\midrule
$P,\ X\sim P$ & input law on $\R$ and a draw from it \\
$\ip{f}{g}_P$ & inner product $\E_P[f(X)g(X)]$ on $L^2(P)$ \\
$\sigma^2$ & input variance $\Var(X)>0$ \\
$\mu_3,\ \mu_4$ & central moments $\E[X^3],\ \E[X^4]$ \\
$\delta$ & skew coefficient $\mu_3/\sigma^2$ \\
$\lambda$ & $\mu_3/\sigma$ \\
$\rho$ & $\rho^2=\mu_4-\sigma^4-\mu_3^2/\sigma^2\ge 0$ \\
$\psi_k$ & matched \VWK{} basis: oriented orthonormal polynomials of $P$; $\psi_0=1$, $\psi_1(x)=x/\sigma$, $\psi_2=\pi_2/\rho$ with $\pi_2(x)=x^2-\delta x-\sigma^2$ and $\lVert\pi_2\rVert_P^2=\rho^2$ \\
$g_k$ & variance-matched Wiener basis $\He_k(x/\sigma)/\sqrt{k!}$, orthonormal in $N(0,\sigma^2)$ but not in $P$ \\
$\Pi_s$ & univariate polynomials of degree $\le s$, $=\Span\{1,x,\dots,x^s\}$ \\
$\Pi_{d,s}$ & $d$-variate polynomials of total degree $\le s$ \\
$\mathcal{I}_{d,s}$ & multi-index set $\{\alpha\in\mathbb{N}^d:\lvert\alpha\rvert\le s\}$ \\
$\Psi_\alpha$ & tensor feature $\prod_{r=1}^{d}\psi_{\alpha_r}$, $\alpha\in \mathcal{I}_{d,s}$ \\
$a_\alpha$ & \VWK{} coordinate of $V$ on $\Psi_\alpha$ \\
$h_\beta$ & monomial coordinate of $V$ on $x^\beta$ \\
$H_s$ & Hankel/moment Gram matrix $(m_{i+j})_{0\le i,j\le s}$, $m_k=\E[X^k]$ \\
$T_s$ & lower-triangular change of basis, $T_s H_s T_s^\top=I$ \\
\bottomrule
\end{tabular}
\end{table}

\section{The VWK basis theorem}
\label{sec:basis}

\begin{lemma}[Oriented Gram--Schmidt VWK basis]
\label{lem:vwk-basis}
Under \cref{ass:moment} A1--A3, applying Gram--Schmidt to
$\{1,x,\ldots,x^s\}$ in $L^2(P)$ constructs a unique polynomial family
$\{\psi_0,\ldots,\psi_s\}$ such that:
\begin{enumerate}[label=(\roman*),leftmargin=*]
\item $\deg\psi_k=k$;
\item $\ip{\psi_i}{\psi_j}_P=\delta_{ij}$;
\item the leading coefficient of $\psi_k$ is positive;
\item the coefficients of $\psi_k$ depend only on $m_0,\ldots,m_{2k}$.
\end{enumerate}
\end{lemma}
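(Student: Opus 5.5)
The proof rests on the positive definiteness of the Hankel matrix $H_s$, which was shown to be equivalent to A1--A3. Positive definiteness makes $\ip{\cdot}{\cdot}_P$ a genuine inner product on $\Pi_s$: for $p=\sum_i c_i x^i$ we have $\lVert p\rVert_P^2=c^\top H_s c$, which vanishes only when $c=0$. Moreover, all inner products of polynomials of degree $\le s$ are finite by A1, since $\ip{x^i}{x^j}_P=m_{i+j}$ with $i+j\le 2s$. Hence $\Pi_s$ is an $(s+1)$-dimensional inner-product space with basis $1,x,\ldots,x^s$. Classical Gram--Schmidt on this ordered basis then yields an orthonormal family, and I would verify the four properties by induction on $k$.

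For existence and properties (i)--(iii), fix $k$ and set
\[
\pi_k=x^k-\sum_{j<k}\ip{x^k}{\psi_j}_P\,\psi_j .
\]
Since the monomials are independent, $\pi_k\neq 0$. Define $\psi_k=\pi_k/\lVert\pi_k\rVert_P$. By induction $\deg\psi_j=j$ for $j<k$, so $\pi_k$ is monic of degree $k$. Therefore $\psi_k$ has degree $k$ and positive leading coefficient $1/\lVert\pi_k\rVert_P$. Orthonormality holds by construction.

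Uniqueness is the step needing care; it is where the orientation condition (iii) is used. Suppose $\{\phi_k\}$ is another family satisfying (i)--(iii). I would argue by induction that $\phi_k=\psi_k$. By (i), $\phi_k\in\Pi_k=\Span\{\psi_0,\ldots,\psi_k\}$. By (ii), $\phi_k$ is orthogonal to $\phi_0,\ldots,\phi_{k-1}$, which equal $\psi_0,\ldots,\psi_{k-1}$ by the induction hypothesis. Hence $\phi_k=c\,\psi_k$. Unit norm forces $c=\pm1$, and the positive leading coefficient forces $c=1$. Equivalently, the lower-triangular $T_s$ with $T_sH_sT_s^\top=I$ and positive diagonal is the inverse of the Cholesky factor of $H_s$, whose uniqueness is standard.

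For (iv), expand in monomials. Every inner product of polynomials in $\Pi_k$ is a bilinear form in the coefficient vectors with matrix $H_k$, whose entries are $m_0,\ldots,m_{2k}$. By induction, the coefficients of $\psi_j$ for $j<k$ are functions of $m_0,\ldots,m_{2j}$, and $2j\subseteq$ the range up to $2k$. The projections $\ip{x^k}{\psi_j}_P$ involve only $m_{k+i}$ with $i\le j<k$. Finally, $\lVert\pi_k\rVert_P^2$ involves moments up to $m_{2k}$. Thus every coefficient of $\psi_k$ is a rational function, with a square root in the normalization, of $m_0,\ldots,m_{2k}$.

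An explicit alternative is the determinant formula
\[
\psi_k\propto\det\begin{pmatrix}H_{k-1}\text{ rows with column }(1,\dots,x^k)\end{pmatrix}
\quad\text{normalized by }\sqrt{\det H_{k-1}\det H_k}.
\]
This makes the dependence on $m_0,\ldots,m_{2k}$ visible directly. The only genuine subtlety in the whole argument is the uniqueness step, and the orientation condition (iii) settles it.
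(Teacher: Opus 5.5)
Your proof is correct and follows the paper's argument. Both build $\psi_k$ by inductive Gram--Schmidt, get nonvanishing from A3, prove uniqueness because the $k$th polynomial lies in the one-dimensional orthogonal complement of $\Pi_{k-1}$ in $\Pi_k$ with the sign fixed by orientation, and obtain (iv) by tracking which moments enter each inner product. Your Cholesky reformulation of uniqueness is a valid equivalent, and the determinant aside is not needed; stated precisely, it appends the row $(1,x,\ldots,x^k)$ to the first $k$ rows of $H_k$, not a column to $H_{k-1}$.
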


\begin{proof}
Set $\psi_0=1$. Suppose $\psi_0,\ldots,\psi_{k-1}$ have been constructed. Define
\[
  u_k(x)=x^k-\sum_{j<k}\ip{x^k}{\psi_j}_P\psi_j(x).
\]
By construction, $u_k$ is orthogonal to the span of the previous basis elements.
It is a polynomial of degree exactly $k$, since no $\psi_j$ with $j<k$ contains
an $x^k$ term. Assumption A3 implies that $u_k$ is nonzero in $L^2(P)$; otherwise
$x^k$ would lie in $\Span\{1,\ldots,x^{k-1}\}$ in $L^2(P)$. Normalize
$u_k$ and, if necessary, multiply by $-1$ to make the leading coefficient
positive.

The recursion uses inner products of polynomials of degree at most $k$, hence
moments through order $2k$. For uniqueness, suppose another family satisfies the
same properties. At step $k$, both $k$th polynomials lie in the one-dimensional
orthogonal complement of $\Pi_{k-1}$ inside $\Pi_k$, so they differ only by sign.
The positive-leading orientation fixes that sign.
\end{proof}

\begin{remark}[Classical status]
\Cref{lem:vwk-basis} is classical: existence and uniqueness of the orthonormal
family generated by a positive-definite moment sequence is standard
\citep{Szego1939,Chihara1978}, and building it numerically from the moments of an
arbitrary law is arbitrary polynomial chaos \citep{OladyshkinNowak2012}. We use it
as the computational building block of the Kunchenko construction and claim no novelty for it.
\end{remark}

\begin{proposition}[Kunchenko normal system and matched-basis diagonalization]
\label{prop:normalsystem}
Let $V^\star=\sum_{j=0}^{s}K_j x^j$ be the $L^2(P)$ projection of an output $Y$
onto $\Pi_s$. Then the monomial coordinate vector $K=(K_0,\dots,K_s)^\top$ solves
the Kunchenko normal system
\[
  H_s K = B,\qquad (H_s)_{ij}=m_{i+j},\quad B_i=\E[Y X^i],
\]
in which the Hankel moment (Gram) matrix $H_s$ is exactly the operator $F$ of
$FK=B$ \citep{Kunchenko2002,Kunchenko2006,Zabolotnii2026Survey}. The matched VWK
change of basis $T_s$, whose $k$th row holds the monomial coefficients of
$\psi_k$, is the inverse-Cholesky/Gram--Schmidt factor of $H_s$, namely
$T_s H_s T_s^\top = I$. In the matched basis the Gram operator is the identity, so
the projection is recovered coordinate-wise, $a_k=\E[Y\psi_k(X)]$, with
$a=T_s^{-\top}K$. The variance-matched Gaussian/Wiener estimator performs the same
diagonal solve but with the wrong moment matrix---the Gaussian Hankel $H_s^{N}$ in
place of $H_s$---which is the mechanism behind \cref{prop:penalty}.
\end{proposition}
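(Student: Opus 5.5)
The proof runs in four steps: derive the normal system, identify $T_s$ as an inverse Cholesky factor, read off the diagonal coordinates, and then describe the Gaussian variant as a matter of interpretation.

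\emph{Step 1: normal system.} Assume $\E[Y^2]<\infty$ so that $B$ is finite. Because $\Pi_s$ is finite-dimensional, it is closed in $L^2(P)$, and the projection $V^\star$ exists and is characterized by orthogonality of the residual: $\E[(Y-V^\star(X))X^i]=0$ for $i=0,\dots,s$. Substituting $V^\star=\sum_j K_j x^j$ turns these conditions into $\sum_j m_{i+j}K_j=\E[YX^i]$, which is $H_sK=B$. By the remark after \cref{ass:moment}, $H_s$ is positive definite, so $K=H_s^{-1}B$ is the unique solution. Identifying $H_s$ with Kunchenko's $F$ is only a matter of naming: his $F$ is the Gram matrix of the generating elements $x^i$.

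\emph{Step 2: $T_s$ as inverse Cholesky factor.} Let row $k$ of $T_s$ hold the monomial coefficients of $\psi_k$ from \cref{lem:vwk-basis}. Then $\psi(x)=T_s\,\mathbf{x}$ with $\mathbf{x}=(1,x,\dots,x^s)^\top$. Property (i) makes $T_s$ lower triangular with nonzero diagonal, and property (iii) makes that diagonal positive. Property (ii) gives
\[
I=\E[\psi\psi^\top]=T_s\,\E[\mathbf{x}\mathbf{x}^\top]\,T_s^\top=T_sH_sT_s^\top .
\]
Hence $H_s=T_s^{-1}T_s^{-\top}$, and $L=T_s^{-1}$ is lower triangular with positive diagonal. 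This is the Cholesky factorization of $H_s$, and its uniqueness agrees with the uniqueness clause of \cref{lem:vwk-basis}.

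\emph{Step 3: diagonal solve in the matched basis.} Write $V^\star=K^\top\mathbf{x}=K^\top T_s^{-1}\psi=a^\top\psi$ with $a=T_s^{-\top}K$. Orthonormality and the residual orthogonality from Step 1 give $a_k=\langle V^\star,\psi_k\rangle_P=\E[Y\psi_k(X)]$. As a consistency check, $a=T_s^{-\top}H_s^{-1}B=T_s^{-\top}T_s^\top T_sB=T_sB$, which is exactly $\E[Y\,T_s\mathbf{x}]$.

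\emph{Step 4: the Gaussian estimator.} Let $T_s^N$ be the factor built from the $N(0,\sigma^2)$ moments, so that its rows are the $g_k$. The cross-correlation estimator outputs $\tilde a=T_s^NB$ and reconstructs $\tilde K=(T_s^N)^\top\tilde a=(H_s^N)^{-1}B$. This is the same diagonal solve as in Step 3, but with $H_s^N$ in place of $H_s$. Because it is consistent only when $H_s^N=H_s$ on the relevant moments, this gives the mechanism invoked later.

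The main obstacle is keeping the transpose and inverse conventions consistent between $K$, $a$ and $T_s$. The computations in Steps 2–4 are designed to confirm them. Beyond that, the argument is routine linear algebra.
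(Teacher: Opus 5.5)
Your proof is correct and follows the paper's argument. Residual orthogonality gives $H_sK=B$, orthonormality gives $T_sH_sT_s^\top=I$, and the Gaussian estimator reconstructs $(H_s^{N})^{-1}B$. The only cosmetic differences are two. You obtain $a=T_s^{-\top}K$ from the change of basis and then check it against $T_sB$, whereas the paper computes $a=T_sB$ first and converts using $T_sH_s=T_s^{-\top}$. You also spell out the Cholesky identification and the identity $(T_s^{N})^\top T_s^{N}=(H_s^{N})^{-1}$, which the paper leaves implicit.
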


\begin{proof}
Orthogonality of the residual to each monomial gives
$\E[(Y-\sum_j K_j X^j)X^i]=0$, that is $\sum_j m_{i+j}K_j=\E[YX^i]$, i.e.\ $H_sK=B$;
positive definiteness of $H_s$ under \cref{ass:moment} makes $K$ unique. Since
$\psi_k=\sum_j (T_s)_{kj}x^j$, one has
$(T_sH_sT_s^\top)_{ij}=\ip{\psi_i}{\psi_j}_P=\delta_{ij}$, and $T_s$ is lower
triangular by \cref{lem:vwk-basis}; hence $T_sH_s=T_s^{-\top}$. Then
$a_k=\ip{Y}{\psi_k}_P=\sum_j (T_s)_{kj}B_j=(T_sB)_k$, so
$a=T_sB=T_sH_sK=T_s^{-\top}K$. The Wiener basis $g_k$ is orthonormal for
$N(0,\sigma^2)$, with Gram matrix $H_s^{N}\neq H_s$ under $P$; its diagonal
coefficients $\E[Yg_k(X)]$ thus reconstruct $(H_s^{N})^{-1}B$, solving the normal
system with $H_s^{N}$, and the discrepancy $H_s-H_s^{N}$ is governed at order two
by the skew coefficient $\delta=\mu_3/\sigma^2$. The penalty of \cref{prop:penalty}
is stated for the self-normalized diagonal variant of this Gaussian estimator---the
population limit of the sample cross-correlation coefficients---which coincides with
the fixed-norm reconstruction here at orders $0$ and $1$ and is the conservative
choice at order $2$ (see the remark after \cref{prop:penalty}).
\end{proof}

\section{Finite-memory Volterra expansion}
\label{sec:volterra}

For a multi-index $\alpha=(\alpha_1,\ldots,\alpha_d)$ define the tensor basis
function
\[
  \Psi_\alpha(x_1,\ldots,x_d)=\prod_{r=1}^d \psi_{\alpha_r}(x_r),
\]
and let
\[
  \mathcal{I}_{d,s}=\{\alpha\in\mathbb{N}^d:|\alpha|\le s\}.
\]

\begin{lemma}[Tensor VWK basis]
\label{lem:tensor}
Under \cref{ass:moment}, the family
$\{\Psi_\alpha:\alpha\in\mathcal{I}_{d,s}\}$ is an orthonormal basis of
$\Pi_{d,s}$ in $L^2(P^d)$.
\end{lemma}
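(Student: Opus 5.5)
The plan is to prove three things in turn: orthonormality of the family in $L^2(P^d)$, containment in $\Pi_{d,s}$, and spanning of $\Pi_{d,s}$. Linear independence then follows from orthonormality, so the family is a basis.

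\emph{Orthonormality.} By \cref{ass:moment} A4 the coordinates $X_1,\ldots,X_d$ are independent with common law $P$. For $\alpha,\alpha'\in\mathcal{I}_{d,s}$ the product of the two features factorizes, so
\[
  \ip{\Psi_\alpha}{\Psi_{\alpha'}}_{P^d}
  =\E\Big[\prod_{r=1}^d \psi_{\alpha_r}(X_r)\psi_{\alpha'_r}(X_r)\Big]
  =\prod_{r=1}^d \ip{\psi_{\alpha_r}}{\psi_{\alpha'_r}}_P
  =\prod_{r=1}^d \delta_{\alpha_r\alpha'_r}
  =\delta_{\alpha\alpha'} .
\]
The last two equalities use \cref{lem:vwk-basis}(ii); note that $\alpha_r\le|\alpha|\le s$, so every factor is one of $\psi_0,\ldots,\psi_s$. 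The only point needing care is that the expectations are finite and that Fubini applies. Each factor $\psi_{\alpha_r}\psi_{\alpha'_r}$ is a polynomial of degree at most $2s$ in a single coordinate, so it is integrable under A1. Independence then makes the product integrable and lets the expectation split into the product shown.

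\emph{Containment.} By \cref{lem:vwk-basis}(i), $\psi_{\alpha_r}$ has degree exactly $\alpha_r$. Hence $\Psi_\alpha$ is a polynomial of total degree $|\alpha|\le s$ and lies in $\Pi_{d,s}$.

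\emph{Spanning.} The triangularity of the change of basis does most of the work. Since $\psi_k$ has positive leading coefficient $c_k$, we have $x^k=c_k^{-1}\psi_k(x)+(\text{combination of }\psi_j,\ j<k)$. Taking products over the coordinates, each monomial $x^\beta$ with $|\beta|\le s$ becomes a combination of the $\Psi_\gamma$ with $\gamma\le\beta$ componentwise. Every such $\gamma$ has $|\gamma|\le|\beta|\le s$, so $\gamma\in\mathcal{I}_{d,s}$. Therefore $\Pi_{d,s}\subseteq\Span\{\Psi_\alpha:\alpha\in\mathcal{I}_{d,s}\}$, and together with containment the two spaces are equal.

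\emph{Main obstacle.} The subtle point is whether $\Pi_{d,s}$ should be read as a space of polynomials or as a space of $L^2(P^d)$-classes; for finitely supported $P$ these differ. The argument above works entirely with $L^2$-classes: orthonormality is an $L^2$ statement, and so it forces the $\Psi_\alpha$ to be linearly independent in $L^2(P^d)$. Consequently $\dim\Pi_{d,s}=|\mathcal{I}_{d,s}|$ in $L^2(P^d)$, which also shows that the monomials $x^\beta$, $|\beta|\le s$, are independent there under A3. No additional assumption is needed beyond A1--A4.
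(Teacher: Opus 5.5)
Your proof is correct and follows essentially the paper's route: you factorize the inner product using independence under $P^d$ to get orthonormality, and you use the triangular monomial-to-$\psi_k$ change of basis to get spanning, making it explicit through the componentwise order $\gamma\le\beta$ where the paper speaks of block triangularity by total degree. The integrability check and the remark distinguishing $L^2(P^d)$-classes from polynomials are careful additions; the paper's proof does not spell them out, but they do not change the argument.
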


\begin{proof}
For $\alpha,\beta\in\mathcal{I}_{d,s}$, independence under $P^d$ gives
\[
  \ip{\Psi_\alpha}{\Psi_\beta}_{P^d}
  =\prod_{r=1}^d\ip{\psi_{\alpha_r}}{\psi_{\beta_r}}_P
  =\prod_{r=1}^d\delta_{\alpha_r\beta_r}.
\]
Thus the tensor functions are orthonormal. They span $\Pi_{d,s}$ because the
one-dimensional change of basis between monomials and $\psi_k$ is triangular with
nonzero diagonal entries, and the tensor-product change of basis restricted to
total degree $\le s$ remains block triangular with nonzero diagonal blocks.
\end{proof}

A finite-memory Volterra polynomial in this paper is
\[
  V_h(x_1,\ldots,x_d)
  =\sum_{\beta\in\mathcal{I}_{d,s}}h_\beta x^\beta .
\]
Its VWK expansion is
\[
  V_a(x_1,\ldots,x_d)
  =\sum_{\alpha\in\mathcal{I}_{d,s}}a_\alpha \Psi_\alpha(x_1,\ldots,x_d).
\]

\begin{theorem}[Finite-memory coefficient mapping]
\label{thm:mapping}
The map between monomial Volterra coefficients $h_\beta$ and VWK coefficients
$a_\alpha$ is linear and bijective on $\Pi_{d,s}$.
\end{theorem}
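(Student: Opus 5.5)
The plan is to write the coefficient map explicitly through the one-dimensional change of basis $T_s$ of \cref{prop:normalsystem}, and then read off linearity and bijectivity from a triangular structure.

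Start from $\psi_k(x)=\sum_{j\le k}(T_s)_{kj}x^j$, where $T_s$ is lower triangular with positive diagonal entries by \cref{lem:vwk-basis}(iii). Expanding the product defining $\Psi_\alpha$ gives
\[
  \Psi_\alpha(x)=\sum_{\beta\le\alpha}\Big(\prod_{r=1}^d (T_s)_{\alpha_r\beta_r}\Big)x^\beta ,
\]
where $\beta\le\alpha$ is componentwise. Every such $\beta$ satisfies $|\beta|\le|\alpha|\le s$, so it stays in $\mathcal{I}_{d,s}$. Define the square matrix $M=(M_{\alpha\beta})_{\alpha,\beta\in\mathcal{I}_{d,s}}$ with entries $M_{\alpha\beta}=\prod_r (T_s)_{\alpha_r\beta_r}$, which vanish unless $\beta\le\alpha$. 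Substituting into $V_a=\sum_\alpha a_\alpha\Psi_\alpha$ and collecting monomials shows that $V_a=V_h$ as polynomials exactly when $h=M^\top a$. This map is manifestly linear.

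For bijectivity, I would order $\mathcal{I}_{d,s}$ by any linear extension of the componentwise partial order, for instance graded by $|\alpha|$ and then lexicographically. In that ordering $M$ is lower triangular. Its diagonal entries are $M_{\alpha\alpha}=\prod_r (T_s)_{\alpha_r\alpha_r}>0$, so $M$ is invertible and $a=M^{-\top}h$. Equivalently, $M$ is the restriction of the Kronecker power $T_s^{\otimes d}$, truncated to $s$, to the rows and columns indexed by $\mathcal{I}_{d,s}$. This truncated block is closed under the triangular action because $\beta\le\alpha$ preserves total degree $\le s$.

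It remains to make sense of ``on $\Pi_{d,s}$''. The monomials $\{x^\beta\}_{\beta\in\mathcal{I}_{d,s}}$ form a basis of $\Pi_{d,s}$ in $L^2(P^d)$: they span by definition, and \cref{lem:tensor} gives an orthonormal basis of the same cardinality related to them by the invertible $M$. Hence each element of $\Pi_{d,s}$ has a unique $h$ and a unique $a$, and $h\mapsto a$ is a well-defined linear bijection. One could also write it as $a_\alpha=\langle V_h,\Psi_\alpha\rangle_{P^d}$, using the orthonormality in \cref{lem:tensor}.

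The only delicate point is confirming that truncation to total degree $\le s$ does not break triangularity or invertibility. This is handled by the downward-closedness of $\mathcal{I}_{d,s}$ under $\le$. The rest is routine bookkeeping.
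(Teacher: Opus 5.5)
Your proposal is correct and follows the paper's proof: both identify the coefficient map with the restriction of $T_s^{\otimes d}$ to $\mathcal{I}_{d,s}$, and both get invertibility from its triangular structure with diagonal entries $\prod_r (T_s)_{\alpha_r\alpha_r}\neq 0$. Your version is more explicit than the paper's in two places. Ordering by a linear extension of the componentwise order gives a genuinely triangular matrix, where the paper only says ``block triangular by total degree.'' You also check that the monomials are linearly independent in $L^2(P^d)$, so the coordinates are well defined. The argument is nonetheless the same.
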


\begin{proof}
Let $T_s$ be the one-dimensional matrix whose $k$th row contains the monomial
coefficients of $\psi_k$. By \cref{lem:vwk-basis}, $T_s$ is triangular with
nonzero diagonal entries and hence invertible. For memory length $d$, the
corresponding tensor-product basis change is the restriction of $T_s^{\otimes d}$
to the total-degree subspace. Ordering terms by total degree makes this
restriction block triangular, again with nonzero diagonal entries inherited from
the one-dimensional leading coefficients. Therefore the restricted change of
basis is invertible.
\end{proof}

\begin{theorem}[Finite-memory VWK projection]
\label{thm:projection}
Every $V\in\Pi_{d,s}$ has a unique expansion
\[
  V(X)=\sum_{\alpha\in\mathcal{I}_{d,s}} a_\alpha\Psi_\alpha(X),
  \quad X\sim P^d.
\]
If $Y=V(X)+\varepsilon$ and
$\E[\varepsilon\Psi_\alpha(X)]=0$ for all $\alpha\in\mathcal{I}_{d,s}$, then
\[
  a_\alpha=\E[Y\Psi_\alpha(X)].
\]
Without this orthogonality condition on $\varepsilon$, the same formula gives
the $L^2(P^d)$ projection coefficient of $Y$ onto $\Pi_{d,s}$.
\end{theorem}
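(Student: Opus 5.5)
The plan is to derive all three claims from \cref{lem:tensor} together with elementary Hilbert-space geometry in $L^2(P^d)$.

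\emph{Existence and uniqueness.} By \cref{lem:tensor}, $\{\Psi_\alpha:\alpha\in\mathcal{I}_{d,s}\}$ is an orthonormal basis of the finite-dimensional space $\Pi_{d,s}$. Hence every $V\in\Pi_{d,s}$ can be written as $\sum_\alpha a_\alpha\Psi_\alpha$. The coefficients are unique because orthonormal vectors are linearly independent. Concretely, if $\sum_\alpha c_\alpha\Psi_\alpha=0$ in $L^2(P^d)$, then pairing with $\Psi_\beta$ gives $c_\beta=0$. Pairing $V$ itself with $\Psi_\beta$ gives the explicit formula $a_\beta=\ip{V}{\Psi_\beta}_{P^d}$. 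A caveat: uniqueness holds as an element of $L^2(P^d)$, i.e.\ $P^d$-a.s. This is the only sense in which it is needed, since A3 makes the monomials, and hence $\Pi_{d,s}$, faithfully represented in $L^2(P^d)$. Equivalently, uniqueness follows from the bijection of \cref{thm:mapping}.

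\emph{Coefficient formula.} Assume $Y=V(X)+\varepsilon$ with $\E[\varepsilon\Psi_\alpha(X)]=0$. Integrability of $Y\Psi_\alpha(X)$ holds whenever $Y,\varepsilon\in L^2(P^d)$; this is implicit in the statement, since $V\Psi_\alpha$ has total degree at most $2s$ and is integrable by A1 and A4. By linearity of expectation,
\[
  \E[Y\Psi_\alpha(X)]=\sum_{\beta}a_\beta\ip{\Psi_\beta}{\Psi_\alpha}_{P^d}+\E[\varepsilon\Psi_\alpha(X)]=a_\alpha ,
\]
using orthonormality from \cref{lem:tensor} and the assumed orthogonality of the noise.

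\emph{Projection without the noise condition.} For general $Y\in L^2(P^d)$, let $V^\star$ be the orthogonal projection of $Y$ onto the closed, finite-dimensional subspace $\Pi_{d,s}$. Set $\varepsilon^\star=Y-V^\star$. Then $\varepsilon^\star$ is orthogonal to $\Pi_{d,s}$, so $\E[\varepsilon^\star\Psi_\alpha]=0$ for every $\alpha$. Applying the previous step to $Y=V^\star+\varepsilon^\star$ shows that $\E[Y\Psi_\alpha]$ is the $\alpha$-th coefficient of $V^\star$. This is the multivariate analogue of the diagonalization in \cref{prop:normalsystem}: the Gram operator of the tensor basis is the identity, so no linear system has to be solved.

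\emph{Main obstacle.} There is no genuine difficulty here. The only point needing care is stating the integrability and square-integrability hypotheses on $Y$ and $\varepsilon$ so that all expectations are defined. I would make these explicit, namely $Y\in L^2(P^d)$ with moments up to order $2s$ of $X$ finite, rather than leave them implicit.
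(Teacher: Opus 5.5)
Your proposal is correct and follows essentially the same route as the paper: existence and uniqueness from the orthonormal tensor basis of \cref{lem:tensor}, the coefficient formula by pairing with $\Psi_\alpha$ and using the orthogonality of $\varepsilon$, and the general case by writing $Y$ as its orthogonal projection onto $\Pi_{d,s}$ plus a residual orthogonal to every $\Psi_\alpha$. Your explicit integrability and square-integrability hypotheses only spell out what the paper's three-line proof leaves implicit.
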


\begin{proof}
The first claim follows from \cref{lem:tensor}. Orthonormal expansion gives
$a_\alpha=\ip{V}{\Psi_\alpha}_{P^d}$. If
$Y=V+\varepsilon$ and the residual is orthogonal to every basis function, then
$\E[Y\Psi_\alpha]=a_\alpha$. Otherwise, orthogonal projection in a finite
Hilbert space gives the final statement.
\end{proof}

\begin{proposition}[Asymptotic normality of the diagonal estimator]
\label{prop:clt}
Let $(X_i,Y_i)$, $i=1,\dots,n$, be iid with $X_i\sim P^d$, and assume
$\E[Y^2\Psi_\alpha(X)^2]<\infty$ and $\E[\Psi_\alpha(X)^4]<\infty$. Fix
$\alpha\in\mathcal{I}_{d,s}$ and let $a_\alpha=\E[Y\Psi_\alpha(X)]$ be the
population coefficient of \cref{thm:projection}. The diagonal
cross-correlation estimator
\[
  \widehat a_\alpha
  =\frac{\sum_{i=1}^n Y_i\Psi_\alpha(X_i)}{\sum_{i=1}^n \Psi_\alpha(X_i)^2}
\]
satisfies
\[
  \sqrt{n}\,(\widehat a_\alpha-a_\alpha)\ \xrightarrow{d}\ N(0,V_\alpha),
  \qquad
  V_\alpha=\Var\!\big(\Psi_\alpha(X)\,(Y-a_\alpha\Psi_\alpha(X))\big),
\]
where $\E[\Psi_\alpha(X)^2]=1$ by the orthonormality of \cref{lem:tensor}. If,
moreover, $Y=\sum_{\beta\in\mathcal{I}_{d,s}}a_\beta\Psi_\beta(X)+\varepsilon$
with $\E[\varepsilon\mid X]=0$ and $\Var(\varepsilon\mid X)=\sigma_\varepsilon^2$,
then
\[
  V_\alpha=\sigma_\varepsilon^2
  +\sum_{\beta,\gamma\neq\alpha}
     a_\beta a_\gamma\,\E\!\big[\Psi_\alpha^2\,\Psi_\beta\,\Psi_\gamma\big].
\]
\end{proposition}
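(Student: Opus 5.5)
This is a standard ratio-estimator argument: apply the delta method to the pair of sample means $\bar N_n=n^{-1}\sum_i Y_i\Psi_\alpha(X_i)$ and $\bar D_n=n^{-1}\sum_i\Psi_\alpha(X_i)^2$, so that $\widehat a_\alpha=\bar N_n/\bar D_n$. By \cref{thm:projection} the population means are $\E[Y\Psi_\alpha]=a_\alpha$, and by \cref{lem:tensor} $\E[\Psi_\alpha^2]=1$.

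The moment assumptions $\E[Y^2\Psi_\alpha^2]<\infty$ and $\E[\Psi_\alpha^4]<\infty$ make the vector $(Y\Psi_\alpha,\Psi_\alpha^2)$ square integrable. The multivariate CLT then gives joint asymptotic normality of $\sqrt n(\bar N_n-a_\alpha,\bar D_n-1)$. The function $g(u,v)=u/v$ is differentiable at $(a_\alpha,1)$ with gradient $(1,-a_\alpha)$. The delta method therefore yields asymptotic normality with variance
\[
\Var\bigl(Y\Psi_\alpha-a_\alpha\Psi_\alpha^2\bigr)=\Var\bigl(\Psi_\alpha(Y-a_\alpha\Psi_\alpha)\bigr),
\]
which is the stated $V_\alpha$.

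An equivalent linearization is the identity $\widehat a_\alpha-a_\alpha=\bar D_n^{-1}\,n^{-1}\sum_i\Psi_\alpha(X_i)\bigl(Y_i-a_\alpha\Psi_\alpha(X_i)\bigr)$. Combined with $\bar D_n\to1$ a.s. by the LLN and Slutsky's lemma, it gives the same limit. I would present this version because it is shorter.

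For the structured model, set $R=Y-a_\alpha\Psi_\alpha=\sum_{\beta\neq\alpha}a_\beta\Psi_\beta+\varepsilon$.

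1. The mean vanishes: $\E[\Psi_\alpha R]=0$ by orthonormality and $\E[\varepsilon\mid X]=0$. Hence $V_\alpha=\E[\Psi_\alpha^2R^2]$.
2. Expanding $R^2$ gives three kinds of terms:
   1. The pure noise term is $\E[\Psi_\alpha^2\varepsilon^2]=\E[\Psi_\alpha^2\,\E(\varepsilon^2\mid X)]=\sigma_\varepsilon^2\E[\Psi_\alpha^2]=\sigma_\varepsilon^2$.
   2. The cross terms $\E[\Psi_\alpha^2\Psi_\beta\varepsilon]$ vanish by conditioning on $X$.
   3. The remaining terms form the double sum $\sum_{\beta,\gamma\neq\alpha}a_\beta a_\gamma\E[\Psi_\alpha^2\Psi_\beta\Psi_\gamma]$.

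The main obstacle is integrability bookkeeping in the structured formula. The fourth-order products $\E[\Psi_\alpha^2\Psi_\beta\Psi_\gamma]$ must be finite, which requires moments of $P$ up to order $4s$, not just the $2s$ of \cref{ass:moment}. I would note that this is implied by the hypothesis $\E[Y^2\Psi_\alpha^2]<\infty$ applied to the model, or else state it as an implicit extra assumption via Cauchy--Schwarz with finite $\E[\Psi_\beta^4]$.

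Note that in the structured case the double sum is generally nonzero even though the $\Psi$ are orthonormal. Under non-Gaussian $P$, fourth-order mixed moments of $\Psi$ do not factor into deltas, so the other components contribute "self-noise" to the variance.
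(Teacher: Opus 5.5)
Your proof is correct and follows the paper's route: a bivariate CLT plus the delta method for $g(u,v)=u/v$ at $(a_\alpha,1)$ with gradient $(1,-a_\alpha)$, then expanding $\E[\Psi_\alpha^2R^2]$ and removing the $\varepsilon$-cross terms by conditioning on $X$; your Slutsky rewriting is the same influence function $\Psi_\alpha(Y-a_\alpha\Psi_\alpha)$ that the paper computes. One correction to your closing aside: the nonzero triple products are not specific to non-Gaussian input, since already for unit-variance Gaussian input $\E[\psi_1^2\psi_2]=\E[X^2(X^2-1)]/\sqrt{2}=\sqrt{2}$, so the ``self-noise'' term is present there too.
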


\begin{proof}
Write $N_n=n^{-1}\sum_i Y_i\Psi_\alpha(X_i)$ and
$D_n=n^{-1}\sum_i\Psi_\alpha(X_i)^2$, so $\widehat a_\alpha=g(N_n,D_n)$ with
$g(u,v)=u/v$. The population values of the two sample means are
\[
  \bigl(\E[Y\Psi_\alpha],\E[\Psi_\alpha^2]\bigr)=(a_\alpha,1),
\]
the second strictly positive. The moment hypotheses make the covariance of the iid vectors
$(Y_i\Psi_\alpha(X_i),\Psi_\alpha(X_i)^2)$ finite, so the bivariate central limit
theorem and the delta method apply. The gradient of $g$ at $(a_\alpha,1)$ is
$(1,-a_\alpha)$, giving influence function
\[
  \phi
  =(Y\Psi_\alpha-a_\alpha)-a_\alpha(\Psi_\alpha^2-1)
  =\Psi_\alpha(Y-a_\alpha\Psi_\alpha).
\]
This is the score of the exactly identified moment condition
$\E[\Psi_\alpha(Y-a_\alpha\Psi_\alpha)]=0$ \citep{Hansen1982}; since
$\E[\phi]=0$, we obtain $\sqrt n(\widehat a_\alpha-a_\alpha)\xrightarrow{d}N(0,V_\alpha)$
with $V_\alpha=\E[\phi^2]=\Var(\Psi_\alpha(Y-a_\alpha\Psi_\alpha))$. Under the
orthogonal-residual model,
$Y-a_\alpha\Psi_\alpha=\sum_{\beta\neq\alpha}a_\beta\Psi_\beta+\varepsilon$;
squaring, taking expectations, using $\E[\varepsilon\mid X]=0$ to annihilate the
$\varepsilon$-cross terms and $\E[\Psi_\alpha^2]=1$ for the noise term yields the
displayed decomposition.
\end{proof}

\begin{remark}[Reported standard errors]
The coverage columns of \cref{tab:h4} use the plug-in delta-method estimator of
$V_\alpha$, namely the sample second moment of the fitted scores
$\Psi_\alpha(X_i)(Y_i-\widehat a_\alpha\Psi_\alpha(X_i))$ divided by
$\big(n^{-1}\sum_i\Psi_\alpha(X_i)^2\big)^2$. Its cross-feature contribution is
carried by the triple products $\E[\Psi_\alpha^2\Psi_\beta\Psi_\gamma]$, which
vanish for symmetric $P$ but are nonzero under skew; a nominal interval that
ignores them is too short, which accounts for the sub-nominal coverage of the
centered-exponential rows. The mild undercoverage of the symmetric-contaminated
row (\cref{tab:h4}, $0.885$) has a different source: its triple products vanish,
and the shortfall instead reflects the slow convergence of the plug-in variance
estimate under the heavy contaminated tails, where the fourth-to-$4s$ input
moments that enter $\widehat V_\alpha$ are estimated with large sampling noise.
With the empirical (random) basis at memory $d=1$, $\widehat a_\alpha$ coincides
numerically with ordinary least squares in the same span.
\end{remark}

\begin{figure}[t]
\centering
\resizebox{\textwidth}{!}{%
\begin{tikzpicture}[
  node distance=0.95cm and 1.05cm,
  box/.style={draw, rounded corners=2pt, align=center, minimum width=2.6cm,
              minimum height=0.9cm, font=\small, fill=gray!7},
  arr/.style={-{Latex[length=2mm]}, thick}
]
\node[box] (law) {input law\\$P$};
\node[box, right=of law] (mom) {moments\\$m_0,\ldots,m_{2s}$};
\node[box, right=of mom] (gs) {oriented\\Gram--Schmidt};
\node[box, right=of gs] (basis) {basis\\$\psi_0,\ldots,\psi_s$};
\node[box, below=of basis] (tensor) {tensor features\\$\Psi_\alpha=\prod_r\psi_{\alpha_r}$};
\node[box, left=of tensor] (proj) {Volterra projection\\$a_\alpha=\E[Y\Psi_\alpha]$};
\node[box, left=of proj] (map) {linear map\\$a\leftrightarrow h$};
\draw[arr] (law) -- (mom);
\draw[arr] (mom) -- (gs);
\draw[arr] (gs) -- (basis);
\draw[arr] (basis) -- (tensor);
\draw[arr] (tensor) -- (proj);
\draw[arr] (proj) -- (map);
\end{tikzpicture}}
\caption{Finite-memory VWK pipeline: the input law determines the orthogonal
polynomial basis, tensor products produce Volterra features, and projection
coefficients are linearly related to monomial Volterra coefficients.}
\label{fig:pipeline}
\end{figure}

\section{Classical reductions}
\label{sec:reductions}

\subsection{Gaussian law and Wiener--Hermite}

\begin{theorem}[Gaussian reduction]
\label{thm:gaussian}
If $P=N(0,\sigma^2)$ with $\sigma>0$, the oriented VWK basis of
\cref{lem:vwk-basis} is
\[
  \psi_k(x)=\frac{\He_k(x/\sigma)}{\sqrt{k!}},
\]
where $\He_k$ is the probabilists' Hermite polynomial.
\end{theorem}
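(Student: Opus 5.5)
The plan is to verify that the candidate family $\phi_k(x)=\He_k(x/\sigma)/\sqrt{k!}$ satisfies the three defining properties (i)--(iii) of \cref{lem:vwk-basis} for $P=N(0,\sigma^2)$. The uniqueness clause of that lemma then forces $\psi_k=\phi_k$ for every $k\le s$. Assumption A1--A3 holds automatically, because the Gaussian has all moments and infinite support. The Hankel matrix $H_s$ is therefore positive definite for every $s$, and the lemma applies at every order.

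First, degree and orientation. The probabilists' Hermite polynomials are monic of degree $k$: for example, from the three-term recurrence $\He_{k+1}(y)=y\He_k(y)-k\He_{k-1}(y)$ with $\He_0=1$ and $\He_1=y$, an induction gives $\He_k(y)=y^k+\text{lower}$. Substituting $y=x/\sigma$ shows that $\phi_k$ has degree exactly $k$ with leading coefficient $\sigma^{-k}/\sqrt{k!}>0$, since $\sigma>0$. This settles (i) and (iii).

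Second, orthonormality, which is the only substantive step. Changing variables $X=\sigma Z$ with $Z\sim N(0,1)$ gives $\ip{\phi_i}{\phi_j}_P=\E[\He_i(Z)\He_j(Z)]/\sqrt{i!\,j!}$. So it suffices to show the classical identity $\E[\He_i(Z)\He_j(Z)]=i!\,\delta_{ij}$. I would prove it from the Rodrigues formula $\He_k(y)\varphi(y)=(-1)^k\varphi^{(k)}(y)$, with $\varphi$ the standard normal density. Take $i\le j$ and integrate by parts $j$ times in $\int \He_i(y)\,(-1)^j\varphi^{(j)}(y)\,dy$. Boundary terms vanish because polynomial times Gaussian derivatives decay, and the result is $\int \He_i^{(j)}(y)\varphi(y)\,dy$. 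This is zero if $j>i$, and equals $i!$ if $j=i$ by monicity. An equivalent route is the generating function $e^{ty-t^2/2}=\sum_k \He_k(y)t^k/k!$, using $\E[e^{(t+u)Z-(t^2+u^2)/2}]=e^{tu}$ and comparing coefficients of $t^iu^j$. The main care point is simply fixing the normalization convention of $\He_k$ consistently, either by Rodrigues or by the generating function. The rest is routine.

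Finally, property (iv) is not needed for identification, since uniqueness in \cref{lem:vwk-basis} uses only (i)--(iii). The conclusion is that for each $k$, both $\psi_k$ and $\phi_k$ are unit vectors with positive leading coefficient in the one-dimensional orthogonal complement of $\Pi_{k-1}$ in $\Pi_k$. Hence they coincide.
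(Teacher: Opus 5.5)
Your proposal is correct and follows the same route as the paper: you check that $\He_k(x/\sigma)/\sqrt{k!}$ has degree $k$, positive leading coefficient $\sigma^{-k}/\sqrt{k!}$, and is orthonormal via $Z=X/\sigma$ and $\E[\He_i(Z)\He_j(Z)]=i!\,\delta_{ij}$, then invoke the uniqueness clause of the oriented Gram--Schmidt lemma. The only difference is that you also prove the Hermite orthogonality identity (by the Rodrigues formula or the generating function), which the paper cites as classical.
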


\begin{proof}
Let $Z=X/\sigma\sim N(0,1)$. Classical Hermite orthogonality gives
\[
  \E[\He_i(Z)\He_j(Z)]=i!\,\delta_{ij}.
\]
Therefore $\He_k(x/\sigma)/\sqrt{k!}$ is orthonormal in
$L^2(N(0,\sigma^2))$. It has degree $k$ and positive leading coefficient
$\sigma^{-k}/\sqrt{k!}$. By the uniqueness part of \cref{lem:vwk-basis}, it is
the VWK basis polynomial.
\end{proof}

Thus Wiener--Hermite \citep{Wiener1958,CameronMartin1947} is recovered as the
Gaussian instance of the distribution-indexed (Wiener--Askey) construction
\citep{Xiu2002}, not displaced; this is a consistency check rather than a
contribution.

\subsection{Askey reductions}

\begin{theorem}[Askey rows as closed VWK instances]
\label{thm:askey}
For each law in \cref{tab:askey}, the oriented VWK basis equals the normalized
classical orthogonal-polynomial family for that law, with sign fixed by positive
leading coefficient.
\end{theorem}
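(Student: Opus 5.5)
The plan is the same uniqueness argument used for \cref{thm:gaussian}, applied row by row to \cref{tab:askey}. By \cref{lem:vwk-basis}, the oriented orthonormal family of $P$ is unique among polynomial families satisfying (i)--(iii). So for each law it is enough to exhibit a classical family $\{p_k\}$ and check three things: $\deg p_k=k$; $\ip{p_i}{p_j}_P=c_k\delta_{ij}$ with $c_k>0$; and the sign of the leading coefficient. Then $\psi_k=\operatorname{sgn}(\mathrm{lc}(p_k))\,p_k/\sqrt{c_k}$. No Gram--Schmidt computation is needed; only the classical orthogonality relation, its squared norm, and the leading coefficient for each row.

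The rows would be treated in turn, recording for each the affine change of variable that carries $P$ to the standard weight.
\begin{itemize}
\item \textbf{Binomial$(N,p)$.} Use the Krawtchouk polynomials $K_k(x;p,N)$, whose squared norm is $\binom{N}{k}^{-1}((1-p)/p)^k$. This row is valid only for $s\le N$, consistent with A3.
\item \textbf{Poisson$(a)$.} Use Charlier $C_k(x;a)$, with squared norm $a^{-k}k!$.
\item \textbf{Negative binomial.} Use Meixner $M_k(x;\beta,c)$.
\item \textbf{Gamma.} Use the generalized Laguerre polynomials $L_k^{(\alpha)}(x/\theta)$ with $\alpha=$ shape$-1$.
\item \textbf{Beta on $[0,1]$.} Use Jacobi $P_k^{(a,b)}(2x-1)$, and take uniform as the special case $a=b=0$ giving Legendre.
\end{itemize}
If the table lists centered versions of these laws, I would note that translating $X$ by a constant maps polynomials of degree $k$ to polynomials of degree $k$ with the same leading coefficient. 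Hence the oriented basis of a shifted law is just the shifted basis, and the reduction carries over unchanged.

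The only step that needs real care is orientation. Several of these families have leading coefficient of alternating or parameter-dependent sign. Laguerre has $\mathrm{lc}(L_k^{(\alpha)})=(-1)^k/k!$. Krawtchouk in the hypergeometric normalization ${}_2F_1(-k,-x;-N;1/p)$ has leading coefficient proportional to $(-1)^k/(p^k(-N)_k)$, whose sign works out to be positive. Meixner has a similar $(1-1/c)^k$ factor. I would compute each leading coefficient explicitly from the hypergeometric representation and insert the correcting factor $(-1)^k$ where needed. Uniqueness in \cref{lem:vwk-basis} then forces equality.

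For the discrete rows, A3 also has to hold for the degrees used: finite support of size $N+1$ requires $s\le N$, while Poisson and negative binomial have infinite support, so every $s$ is admissible. I expect the bookkeeping of normalizations and signs, rather than any conceptual step, to be the main obstacle. The Binomial/Krawtchouk row for arbitrary $N$ can alternatively be cited to the machine-checked proof of \cref{prop:krawtchouk}.
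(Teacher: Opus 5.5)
Your argument is the paper's own proof: normalize each classical family by its norm, fix the sign of the leading coefficient, and invoke the uniqueness clause of \cref{lem:vwk-basis}. Your row-by-row data and the translation remark for centered laws are correct additions.

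One correction concerns the orientation bookkeeping, which you flagged as the delicate step. The $x^k$ coefficient of ${}_2F_1(-k,-x;-N;1/p)$ is $(-k)_k(-1)^k/\bigl((-N)_k\,k!\,p^k\bigr)=1/\bigl(p^k(-N)_k\bigr)$, not $(-1)^k/\bigl(p^k(-N)_k\bigr)$. Its sign is therefore $(-1)^k$, as already seen in $K_1(x;p,N)=1-x/(Np)$. Charlier behaves the same way, with leading coefficient $(-1)^k a^{-k}$ (for example $C_1(x;a)=1-x/a$). So the Krawtchouk and Charlier rows, like Laguerre and Meixner, need the correcting factor $(-1)^k$. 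Only the Hermite and Jacobi/Legendre rows are positively oriented in their standard normalizations.
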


\begin{proof}
Each row supplies a known polynomial family $q_k$ with degree $k$ and
$\ip{q_i}{q_j}_P=n_k\delta_{ij}$, $n_k>0$. Normalizing $q_k$ by $n_k^{-1/2}$ and
fixing the leading sign gives a family satisfying all defining properties of
\cref{lem:vwk-basis}. Uniqueness identifies it with the VWK basis.
\end{proof}

The distribution-to-family correspondence in \cref{tab:askey} is the Wiener--Askey
scheme \citep{Xiu2002,Koekoek2010}, classified for Sheffer systems by
\citet{Schoutens2000}; we recover it as a consistency check, not as a new identity.

\begin{table}[t]
\centering
\caption{Classical distributions and closed VWK instances.}
\label{tab:askey}
\begin{tabular}{@{}lll@{}}
\toprule
Input law $P$ & VWK basis family & Support note \\
\midrule
Gaussian & Hermite & $\R$ \\
Binomial & Krawtchouk & $\{0,\ldots,N\}$, so $s\le N$ \\
Poisson & Charlier & $\mathbb{N}_0$ \\
Negative binomial & Meixner & $\mathbb{N}_0$ \\
Gamma / exponential & Laguerre & $\R_+$ \\
Beta / uniform & Jacobi / Legendre & bounded interval \\
\bottomrule
\end{tabular}
\end{table}

\begin{proposition}[Binomial reduction and machine-checked Krawtchouk orthogonality]
\label{prop:krawtchouk}
Let $P=\mathrm{Binomial}(N,p)$ with $0<p<1$, centered to mean zero. The matched
VWK basis $\{\psi_0,\psi_1,\psi_2\}$ of \cref{lem:vwk-basis} coincides with the
monic-normalized Krawtchouk family $\{K_0,K_1,K_2\}$ \citep{Koekoek2010}, and the
skew coefficient is
\[
  \delta=\frac{\mu_3}{\sigma^2}=1-2p,
\]
which is independent of $N$ and vanishes exactly at $p=\tfrac12$. The orthogonality
$\ip{K_i}{K_j}_P=0$ for $i\neq j$ holds for \emph{arbitrary} $N$ and is
machine-checked in Lean~4 (theorems \texttt{orth12} and \texttt{central3\_N}).
\end{proposition}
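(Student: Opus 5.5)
The plan is to prove the proposition in three parts: compute the low-order central moments of the binomial law, verify orthogonality of the explicit Krawtchouk family of degree at most two, and then invoke the uniqueness part of \cref{lem:vwk-basis}.

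\textbf{Moments.} Let $S\sim\mathrm{Binomial}(N,p)$, $q=1-p$, and $X=S-Np$. Since $S$ is a sum of $N$ iid centered Bernoulli variables $B_r-p$, cumulants add. The Bernoulli cumulants are $\kappa_2=pq$, $\kappa_3=pq(q-p)$ and $\kappa_4=pq(1-6pq)$. Hence
\[
  \sigma^2=Npq,\qquad \mu_3=Npq(1-2p),\qquad \mu_4=3N^2p^2q^2+Npq(1-6pq).
\]
This immediately gives $\delta=\mu_3/\sigma^2=1-2p$. It is free of $N$ and vanishes if and only if $p=\tfrac12$. Note that Assumption A3 requires $s\le N$, so for $s=2$ we assume $N\ge2$. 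The degenerate case $N=1$ is handled only up to $\psi_1$, and $\rho^2>0$ must be checked when $N\ge2$.

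\textbf{Explicit family.} We take $K_0=1$, $K_1(x)=x$, and $K_2(x)=\pi_2(x)=x^2-\delta x-\sigma^2$. These are the centered, monic Krawtchouk polynomials. Shifting the classical $K_n(S;p,N)$ to the variable $x=S-Np$ and rescaling to monic form should reproduce them; I would check this against \citep{Koekoek2010}. The orthogonality relations then reduce to three moment identities:
\[
  \ip{K_0}{K_1}_P=\E X=0,\qquad
  \ip{K_0}{K_2}_P=\sigma^2-0-\sigma^2=0,\qquad
  \ip{K_1}{K_2}_P=\mu_3-\delta\sigma^2=0.
\]
The last identity holds precisely by the definition of $\delta$, using the value of $\mu_3$ above. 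In the Lean formalization this is the step where the closed form of the third central moment for arbitrary $N$ (\texttt{central3\_N}) feeds \texttt{orth12}.

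\textbf{Identification with the matched basis.} The squared norms are $\|K_1\|^2=\sigma^2$ and $\|K_2\|^2=\rho^2=\mu_4-\sigma^4-\mu_3^2/\sigma^2$. From the moments above,
\[
  \rho^2=2N^2p^2q^2-2Np^2q^2=2N(N-1)p^2q^2,
\]
which is positive for $N\ge2$. After normalization, $K_1/\sigma$ and $K_2/\rho$ have positive leading coefficients and satisfy all of (i)--(iii) of \cref{lem:vwk-basis}. Its uniqueness clause then gives $\psi_k=K_k/\|K_k\|_P$ for $k=0,1,2$, exactly as in the proof of \cref{thm:askey}.

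\textbf{Main obstacle.} The only nontrivial ingredient is the closed form of $\mu_3$ (and of $\mu_4$ for the norm) valid for arbitrary $N$, rather than the finitely many instances $N=1,2,\dots$. On paper, cumulant additivity settles this. For a machine proof, where Mathlib lacks these moments, I would first derive the factorial moments $\E[S(S-1)\cdots(S-k+1)]=N(N-1)\cdots(N-k+1)p^k$ for $k\le3$. This follows from the identity $\binom{N}{j}j(j-1)(j-2)=N(N-1)(N-2)\binom{N-3}{j-3}$ and the binomial theorem, which together give the third factorial-moment lemma. I would then convert to raw moments and expand the central moment by ring normalization.
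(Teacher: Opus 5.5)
Your proposal is correct and follows essentially the paper's route. Orthogonality $K_1\perp K_2$ reduces to $\mu_3=\delta\sigma^2$ for arbitrary $N$, and your factorial-moment identity $\E[S(S-1)(S-2)]=N(N-1)(N-2)p^3$ is precisely the paper's third Bernstein factorial-moment lemma $\sum_\nu \nu(\nu-1)(\nu-2)\,b_{N,\nu}(x)=N(N-1)(N-2)\,x^3$ evaluated at $x=p$. The identification with the matched basis is the same uniqueness argument used for \cref{thm:askey}.

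Your explicit norm $\rho^2=2N(N-1)p^2(1-p)^2$ and your caveat that $\psi_2$ requires $N\ge2$ are correct refinements that the paper leaves implicit. One small point: a formal derivation of $\mu_4$ would need the fourth factorial moment, not just $k\le3$. You do not actually need it, since positivity of $\lVert K_2\rVert_P$ already follows from A3 once $N\ge2$.
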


Closing $K_1\perp K_2$ at general $N$ reduces to the third Bernstein factorial moment
$\sum_{\nu}\nu(\nu-1)(\nu-2)\,b_{N,\nu}(x)=N(N-1)(N-2)\,x^3$, which is absent from
Mathlib and is supplied as a candidate upstream lemma. The statement is therefore a
machine-checked instance of the matched-basis $=$ classical-family recovery of
\cref{thm:askey}, with $\delta=1-2p$ the discrete analogue of the skew coefficient
that governs the order-$2$ penalty of \cref{prop:penalty}.

\section{Misspecification penalty}
\label{sec:penalty}

The construction so far is a recovery of arbitrary polynomial chaos. The result of
this section is the paper's analytic contribution: a closed-form account of what is
lost by projecting in the variance-matched Gaussian/Wiener basis when the input law
is not Gaussian, quantified at order two by the skew coefficient
$\delta=\mu_3/\sigma^2$.

\begin{proposition}[Order-2 misspecification penalty]\label{prop:penalty}
Let $X\sim P$ have $\E[X]=0$, $\Var(X)=\sigma^2>0$ and finite moments $\mu_3,\mu_4$.
Let $\{\psi_0,\psi_1,\psi_2\}$ be the matched $\VWK$ basis, orthonormal in $L^2(P)$, and let
$g_k(x)=\He_k(x/\sigma)/\sqrt{k!}$ be the variance-matched Gaussian basis. Set
\begin{equation*}
  \lambda:=\frac{\mu_3}{\sigma},\qquad
  \rho^2:=\mu_4-\sigma^4-\frac{\mu_3^2}{\sigma^2}\ \ (\geq 0),\qquad
  \delta:=\frac{\mu_3}{\sigma^2}.
\end{equation*}
For a signal $f=\beta_0\psi_0+\beta_1\psi_1+\beta_2\psi_2$, the matched diagonal projection
$a_k=\ip{f}{\psi_k}_P$ recovers $f$ exactly, whereas the variance-matched Gaussian
cross-correlation estimator in its self-normalized diagonal form
\begin{equation*}
  \hat f_W=\sum_{k=0}^{2}\frac{\ip{f}{g_k}_P}{\ip{g_k}{g_k}_P}\,g_k
\end{equation*}
incurs the excess $L^2(P)$ risk
\begin{equation}\label{eq:penalty}
  \bigl\lVert \hat f_W-f \bigr\rVert_P^2
  =(\gamma_2\lambda)^2+(\gamma_2\rho-\beta_2)^2,
  \qquad
  \gamma_2:=\frac{\beta_1\lambda+\beta_2\rho}{\lambda^2+\rho^2}.
\end{equation}
The penalty vanishes for every nondegenerate signal if and only if $\mu_3=0$.
\end{proposition}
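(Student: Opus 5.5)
The plan is to expand the Gaussian basis functions $g_0,g_1,g_2$ in the matched basis $\psi_0,\psi_1,\psi_2$ of \cref{lem:vwk-basis}. Once that is done, the self-normalized estimator $\hat f_W$ becomes a sum of one-dimensional orthogonal projections in $L^2(P)$. Its error can then be read off coordinate-wise, using the orthonormality $\ip{\psi_i}{\psi_j}_P=\delta_{ij}$.

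First I would use the notation of \cref{tab:notation}: $\psi_0=1$, $\psi_1=x/\sigma$, and $\psi_2=\pi_2/\rho$ with $\pi_2(x)=x^2-\delta x-\sigma^2$. Then $g_0=\psi_0$ and $g_1=\psi_1$ exactly. Since $\He_2(z)=z^2-1$,
\[
g_2(x)=\frac{x^2-\sigma^2}{\sqrt2\,\sigma^2}=\frac{\pi_2(x)+\delta x}{\sqrt2\,\sigma^2}=\frac{\rho\psi_2+\lambda\psi_1}{\sqrt2\,\sigma^2},
\]
where I used $\delta\sigma=\mu_3/\sigma=\lambda$. Hence $g_2=c\,u$ with $u:=\lambda\psi_1+\rho\psi_2$ and a constant $c\neq0$.

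Second, the self-normalized term $\ip{f}{g_k}_P\,g_k/\ip{g_k}{g_k}_P$ is invariant under rescaling $g_k$, so it equals the orthogonal projection of $f$ onto $\Span\{g_k\}$. For $k=0,1$ this gives $\beta_0\psi_0$ and $\beta_1\psi_1$. For $k=2$ it gives $\frac{\ip{f}{u}_P}{\lVert u\rVert_P^2}\,u=\gamma_2 u$, because $\ip{f}{u}_P=\beta_1\lambda+\beta_2\rho$ and $\lVert u\rVert_P^2=\lambda^2+\rho^2$. Therefore
\[
\hat f_W-f=\gamma_2\lambda\,\psi_1+(\gamma_2\rho-\beta_2)\,\psi_2,
\]
and orthonormality yields \eqref{eq:penalty}. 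The claim about the matched projection is immediate from \cref{thm:projection} with $d=1$.

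The step needing care is well-definedness and the ``iff''. I would note that $\lambda^2+\rho^2=\mu_4-\sigma^4=\Var(X^2)$. Under A3 with $s=2$ we have $\rho^2=\lVert\pi_2\rVert_P^2>0$, so $\gamma_2$ and $\psi_2$ are defined.

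For the ``if'' direction: if $\mu_3=0$ then $\lambda=0$, so $\gamma_2=\beta_2/\rho$ and both error terms vanish for every signal.

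For the ``only if'' direction: suppose $\lambda\neq0$. A zero penalty forces $\gamma_2=0$ and then $\beta_2=0$. Substituting into $\gamma_2$ leaves $\beta_1\lambda=0$, hence $\beta_1=0$. So every signal with $(\beta_1,\beta_2)\neq(0,0)$, which I take as the meaning of ``nondegenerate'', incurs a strictly positive penalty; one example is $f=\psi_2$, whose penalty is $\lambda^2/(\lambda^2+\rho^2)>0$. Constant signals are reproduced exactly by both estimators, which is why they are excluded. The main obstacle is simply fixing this interpretation of ``nondegenerate'' and the standing assumption $\rho>0$; the algebra itself is routine.
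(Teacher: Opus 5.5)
Your proposal is correct and follows the paper's proof essentially step for step. You rewrite $g_2=(\rho\psi_2+\lambda\psi_1)/(\sqrt2\,\sigma^2)$ via $\pi_2=x^2-\delta x-\sigma^2$, read off $\hat f_W-f=\gamma_2\lambda\,\psi_1+(\gamma_2\rho-\beta_2)\,\psi_2$, and settle the equivalence exactly as the paper does. Two differences are minor. Using scale invariance to identify the $k=2$ term as the orthogonal projection onto $\Span\{\lambda\psi_1+\rho\psi_2\}$, instead of computing $b_2=\sqrt2\,\sigma^2\gamma_2$, is only cosmetic. Your explicit check that $\rho>0$ under A3, so that $\psi_2$ and $\gamma_2$ are defined, makes precise a standing assumption the paper leaves implicit.
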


\begin{proof}
Since $\{\psi_k\}$ is orthonormal in $L^2(P)$, the matched coordinates
$a_k=\ip{f}{\psi_k}_P=\beta_k$ reproduce $f$, so $V:=\lVert f-\sum_k a_k\psi_k\rVert_P^2=0$.

For the Gaussian basis, $g_0=\He_0(x/\sigma)=1=\psi_0$ and
$g_1=\He_1(x/\sigma)=x/\sigma=\psi_1$, while
$g_2=\He_2(x/\sigma)/\sqrt{2}=(x^2-\sigma^2)/(\sigma^2\sqrt{2})$.
Writing $x^2=\pi_2+\delta x+\sigma^2$ with $\pi_2=\rho\,\psi_2$ and $x=\sigma\psi_1$, and using
$\delta\sigma=\lambda$, gives
\begin{equation*}
  g_2=\frac{\rho\,\psi_2+\lambda\,\psi_1}{\sigma^2\sqrt{2}},
  \qquad
  \ip{g_2}{g_2}_P=\frac{\lambda^2+\rho^2}{2\sigma^4},
\end{equation*}
the last equality by orthonormality of $\psi_1,\psi_2$. The diagonal coefficients
$b_k=\ip{f}{g_k}_P/\ip{g_k}{g_k}_P$ therefore satisfy $b_0=\beta_0$, $b_1=\beta_1$, and
\begin{equation*}
  b_2=\frac{\ip{f}{g_2}_P}{\ip{g_2}{g_2}_P}
     =\frac{(\beta_1\lambda+\beta_2\rho)/(\sigma^2\sqrt{2})}{(\lambda^2+\rho^2)/(2\sigma^4)}
     =\sqrt{2}\,\sigma^2\,\gamma_2,
\end{equation*}
so that $b_2 g_2=\gamma_2(\rho\,\psi_2+\lambda\,\psi_1)$. Collecting terms,
\begin{equation*}
  \hat f_W-f=(\gamma_2\lambda)\,\psi_1+(\gamma_2\rho-\beta_2)\,\psi_2,
\end{equation*}
and orthonormality of $\psi_1,\psi_2$ yields \eqref{eq:penalty}. If $\mu_3=0$ then $\lambda=0$,
whence $\gamma_2\rho=\beta_2$ and both bias terms vanish; conversely, for $\mu_3\neq0$ (so
$\lambda\neq0$) a vanishing right-hand side of \eqref{eq:penalty} forces $\gamma_2=0$ and
$\beta_2=0$, hence $\beta_1=0$, i.e.\ the degenerate signal. This proves the stated equivalence.
\end{proof}

\begin{remark}\label{rem:penalty}
Both bias terms in \eqref{eq:penalty} carry the factor $\lambda=\mu_3/\sigma$, so the penalty
vanishes identically for symmetric inputs and otherwise scales with the skew coefficient
$\delta=\mu_3/\sigma^2$. Because the reported penalty depends on which Gaussian estimator is
meant, we fix it here. First, $\hat f_W$ is the \emph{self-normalized} diagonal projection
$b_k=\ip{f}{g_k}_P/\ip{g_k}{g_k}_P$: it is the population limit of the practical sample
cross-correlation estimator, which normalizes each Gaussian coordinate by its realized second
moment $n^{-1}\sum_i g_k(X_i)^2$. Second, it coincides with the classical fixed-norm Lee--Schetzen
construction \citep{LeeSchetzen1965} of \cref{prop:normalsystem} at orders $0$ and $1$---there
$\ip{g_k}{g_k}_P=1$---and differs only at order $2$, where the classical estimator divides instead
by the fixed Gaussian norm $\ip{g_2}{g_2}_{N(0,\sigma^2)}=1$. Third, among the two the
self-normalized form is the one most favorable to the Gaussian basis, because dividing by the true
$P$-norm $\ip{g_2}{g_2}_P$---which the classical procedure does not know---is the diagonal
coefficient that minimizes the order-two risk; hence \eqref{eq:penalty} is a \emph{lower bound} on
the excess risk of the fixed-norm Lee--Schetzen estimator. Even under this conservative accounting,
matching the basis to $P$ is what restores validity off the Gaussian input. For
$X\sim\mathrm{Binomial}(N,p)$ one has
$\delta=1-2p$, independent of $N$ (cf.\ \cref{prop:krawtchouk}). The closed form \eqref{eq:penalty}
agrees with an exact moment computation and with Monte Carlo to machine precision:
for the centered-exponential row
$\sigma^2=1$, $\mu_3=2$, $\mu_4=9$ one obtains $\lambda=\rho=2$, $\gamma_2=0.35$, and excess risk
$0.7^2+0.1^2=0.500000$, reproducing the ratio $W/V=32.36$ of \cref{tab:h4}.
\end{remark}

\begin{remark}[General order]\label{rem:general-s}
The mechanism is not special to $s=2$. For any order $s$ the mismatched estimator performs the
diagonal solve with the Gaussian Hankel $H_s^{N}$ in place of $H_s$ (\cref{prop:normalsystem}), so
in the matched coordinates its output is $\hat f_W=D_s f$ for a fixed bounded operator $D_s$ that
equals the identity if and only if the moments of $P$ agree with the Gaussian ones through order
$2s$; the excess risk is then $\lVert (I-D_s)f\rVert_P^2$. We give the closed form only at $s=2$
because $D_2$ depends on moments through $\mu_4$ and, being the first order at which the Gaussian
and matched second-degree polynomials part company, it isolates the leading skew-activated term
$\lambda=\mu_3/\sigma$. A general-$s$ closed form requires the full moment vector
$(m_1,\dots,m_{2s})$ and couples every degree $\le s$ through $D_s$; symmetry of $P$ through the
relevant order removes the coupling exactly, consistent with the $\mu_3=0$ equivalence of
\cref{prop:penalty}.
\end{remark}

\section{Constructive estimator and reproducibility}
\label{sec:estimator}

\paragraph{Algorithm~1 (Matched \VWK{} Volterra projection).}
\emph{Input:} moments $m_0,\dots,m_{2s}$ of the input law (or a sample
$u_1,\dots,u_n$, from which empirical moments $\hat m_k=\E_n[u^k]$ are formed),
memory order $d$, polynomial degree $s$, and regression pairs
$(\mathbf{x}_t,Y_t)$ with lag vector
$\mathbf{x}_t=(u_t,u_{t-1},\dots,u_{t-d+1})$.
\begin{enumerate}[leftmargin=*]
  \item Assemble the Hankel matrix $H_s=(m_{i+j})_{0\le i,j\le s}$ and compute its
  triangular factor $T_s$ satisfying $T_s H_s T_s^\top=I$; equivalently, apply
  oriented Gram--Schmidt to $1,x,\dots,x^s$ in $\ip{\cdot}{\cdot}_P$ to obtain the
  matched basis $\psi_0,\dots,\psi_s$ with positive leading coefficients
  (\cref{tab:notation}).
  \item For each $\alpha\in \mathcal{I}_{d,s}$ form the tensor feature
  $\Psi_\alpha(\mathbf{x}_t)=\prod_{r=1}^{d}\psi_{\alpha_r}(x_{t,r})$ across the
  rows $\mathbf{x}_t$ of the lag matrix.
  \item Project diagonally: $a_\alpha=\E_n[\,Y\,\Psi_\alpha\,]$ for each
  $\alpha\in \mathcal{I}_{d,s}$, where orthonormality reduces every coordinate to a
  single empirical correlation.
  \item \emph{(Optional ridge.)} For ridge level $\kappa\ge 0$ replace $a_\alpha$ by
  $a_\alpha/(1+\kappa)$; since the matched Gram is $I$, the penalty acts
  coordinatewise.
  \item Map between bases through $T_s$:
  \[
    h_\beta=\sum_{\alpha\in \mathcal{I}_{d,s}} a_\alpha\prod_{r=1}^{d}(T_s)_{\alpha_r,\beta_r},
    \qquad
    a_\alpha=\sum_{\beta\in \mathcal{I}_{d,s}} h_\beta\prod_{r=1}^{d}(T_s^{-1})_{\beta_r,\alpha_r}.
  \]
\end{enumerate}
\emph{Output:} \VWK{} coordinates $\{a_\alpha\}_{\alpha\in \mathcal{I}_{d,s}}$ and the
equivalent monomial Volterra kernel $\{h_\beta\}_{\beta\in \mathcal{I}_{d,s}}$, encoding the
same map
$V=\sum_{\beta} h_\beta\,x^\beta=\sum_{\alpha} a_\alpha\Psi_\alpha$.

\paragraph{Complexity.}
The feature count is $\lvert \mathcal{I}_{d,s}\rvert=\binom{d+s}{s}$. Building the matched
univariate basis is a single $O(s^3)$ triangular factorization of $H_s$, reused
across all $d$ taps; evaluating the $\lvert \mathcal{I}_{d,s}\rvert$ tensor features and
forming the diagonal projection costs $O\!\left(n\,\lvert \mathcal{I}_{d,s}\rvert\right)$,
because orthonormality reduces each coordinate to one correlation. Ordinary
monomial least squares over the same index set instead assembles and inverts a
dense $\lvert \mathcal{I}_{d,s}\rvert\times\lvert \mathcal{I}_{d,s}\rvert$ normal system at cost
$O\!\left(n\,\lvert \mathcal{I}_{d,s}\rvert^{2}+\lvert \mathcal{I}_{d,s}\rvert^{3}\right)$, and inherits
the conditioning of $H_s$, whose condition number grows by roughly a decade per
degree (\cref{subsec:conditioning}); the matched design has Gram $I$ by
construction, with condition number $1$.

\section{Experiments}
\label{sec:experiments}

The experiments test the projection claim of \cref{prop:penalty}, not universal
prediction dominance. Throughout, the reported VWK and Wiener risks are both
\emph{diagonal} (coordinate-wise) projections over the \emph{same}
degree-${\le}\,s$ polynomial span: each coefficient is the cross-correlation
$\ip{Y}{\phi_k}_P/\ip{\phi_k}{\phi_k}_P$ formed independently, which is the
Lee--Schetzen functional read off in the chosen basis $\{\phi_k\}$
\citep{LeeSchetzen1965}. Accordingly we define $W/V$ as the population ratio of
the variance-matched Gaussian diagonal-projection risk to the matched-VWK
diagonal-projection risk,
\[
  W/V=\frac{\text{diagonal-projection risk in the variance-matched Gaussian basis } g_k}
           {\text{diagonal-projection risk in the matched VWK basis } \psi_k},
\]
with finite-memory variants $W/\mathrm{oracle}$ and $W/\mathrm{empirical}$.
Because $g_k$ and $\psi_k$ span the identical subspace, this ratio isolates the
cross-correlation estimator under a mismatched inner product
(\cref{prop:penalty}), not the expressive power of the span. Direct monomial
least squares and Huber \citep{Huber1964} fits, which fit the same span without
the diagonal restriction, are reported alongside, and \cref{subsec:deconfound}
uses full least squares to separate the estimator effect from the span explicitly.

\subsection{Data-generating specification}
\label{subsec:dgp}

Every synthetic table below is generated by the specification in
\cref{tab:dgp}, and the driver scripts, frozen output files, and Lean sources are
released as supplementary material so that each table reproduces exactly from the
seed. The true signal is fixed across regimes as
$f=0.2\,\psi_0+0.8\,\psi_1+0.6\,\psi_2$ in the matched order-two VWK basis
(so $(\beta_1,\beta_2)=(0.8,0.6)$), observed under additive Gaussian noise of
standard deviation $0.25$. The four input laws are centered and, except for the
contaminated mixture, scaled to unit variance; they span the sign of the skew
coefficient $\delta=\mu_3/\sigma^2$ that governs \cref{prop:penalty}.

\begin{table}[t]
\centering
\caption{Data-generating specification for the synthetic experiments. All input
laws are zero-mean. The centered-exponential law is the only asymmetric regime
($\delta=2$); the Gaussian, uniform, and contaminated laws are symmetric
($\delta=0$), so \cref{prop:penalty} predicts no diagonal penalty for them.}
\label{tab:dgp}
\small
\begin{tabular}{@{}llrrr@{}}
\toprule
Regime & Input law $P$ & $\sigma^2$ & $\mu_3$ & $\delta=\mu_3/\sigma^2$ \\
\midrule
Gaussian control        & $\mathcal N(0,1)$                                  & $1.000$ & $0$ & $0$ \\
Centered exponential    & $\mathrm{Exp}(1)-1$                                & $1.000$ & $2$ & $2$ \\
Uniform platykurtic     & $\mathcal U(-\sqrt3,\sqrt3)$                       & $1.000$ & $0$ & $0$ \\
Symmetric contaminated  & $0.9\,\mathcal N(0,0.25)+0.1\,\mathcal N(0,4.0)$   & $0.625$ & $0$ & $0$ \\
\bottomrule
\end{tabular}
\end{table}

Population moments $\mu_k$ for each law are computed in closed form and passed to
the matched Gram--Schmidt construction; the one-lag grid uses $n=2000$ samples,
$200$ repetitions, and seed $20260608$, and the finite-memory grid uses the
parameters stated in \cref{subsec:finite-memory}. Coefficient confidence
intervals use the ratio-estimator (delta-method) standard error for the diagonal
projection $\hat\beta_k=\ip{Y}{\psi_k}_P/\ip{\psi_k}{\psi_k}_P$, which accounts
for the observation-noise contribution that a residual-only conditional standard
error would miss; the reported coverage is the empirical rate at which the
$95\%$ normal interval $\hat\beta_k\pm1.96\,\mathrm{se}$ contains the true
$\beta_k$ across repetitions.

\subsection{Numerical reductions}
\label{subsec:numerical}

The discrete Askey verification covers Krawtchouk, Charlier, and Meixner
polynomials for degrees $0,\ldots,5$. The worst discrepancy between the VWK
Gram--Schmidt basis and the classical normalized basis is $3.65\times10^{-12}$.
The continuous verification covers Hermite, Laguerre, and Legendre polynomials,
with worst discrepancy $3.93\times10^{-14}$ in the stored production report
(the rerun in the current environment reports $3.95\times10^{-14}$, the same
machine-precision scale). These checks support \cref{thm:askey}; they are
numerical verification, not formal proof of the theorem.

The Lean formalization covers selected instances: closed forms for
$\psi_0,\psi_1,\psi_2$ in the Gaussian/Hermite row, Gaussian moments through
order four, orthonormality for $\{\psi_0,\psi_1,\psi_2\}$, and the arbitrary-$N$
Binomial$\to$Krawtchouk orthogonality of \cref{prop:krawtchouk}, which is
obtained by lifting Mathlib's Bernstein-polynomial moment identities evaluated
at $X=p$ and rests on the third Bernstein factorial-moment lemma stated there.
This is formal support for selected instances, not full formal verification of
the construction.

\subsection{One-lag synthetic grid}
\label{subsec:h4}

The production one-lag grid uses the specification of \cref{tab:dgp}
($200$ repetitions, $n=2000$, seed $20260608$). The Gaussian control correctly
remains neutral. The centered exponential regime shows a large
misspecified-Wiener penalty, whereas the uniform and symmetric-contaminated
controls do not, exactly as \cref{prop:penalty} predicts from the sign of
$\delta$.

\begin{table}[t]
\centering
\caption{One-lag synthetic production grid. Ratio $W/V>1$ favors the matched VWK
projection over the misspecified Gaussian/Wiener projection.}
\label{tab:h4}
\scriptsize
\setlength{\tabcolsep}{2pt}
\begin{tabular}{@{}lrrrrrr@{}}
\toprule
Regime & VWK MSE & Wiener MSE & Huber MSE & $W/V$ & $\beta_2$ cov. & all-$\beta$ cov. \\
\midrule
Gaussian control & 0.003248 & 0.003248 & 0.000098 & 1.000 & 0.950 & 0.950 \\
Centered exponential skew & 0.015948 & 0.516133 & 0.000094 & 32.362 & 0.875 & 0.915 \\
Uniform platykurtic & 0.001622 & 0.001622 & 0.000104 & 1.000 & 0.925 & 0.938 \\
Symmetric contaminated & 0.016033 & 0.016033 & 0.000099 & 1.000 & 0.885 & 0.912 \\
\bottomrule
\end{tabular}
\end{table}

The result is strong evidence for an asymmetric finite-moment projection gap.
It is not evidence for a symmetric non-Gaussian advantage, and it is not a claim
that VWK beats robust or likelihood baselines as a general predictor.

\subsection{Estimator de-confounding}
\label{subsec:deconfound}

The $W/V$ ratio of \cref{prop:penalty} compares two diagonal estimators, so
before reading it as a property of the matched basis one must exclude the
possibility that the misspecified Gaussian projection merely fits a poorer span.
Full least squares over a fixed span is basis-invariant, hence least squares in
the VWK, variance-matched Gaussian/Wiener, and raw monomial bases must coincide.
On the centered-exponential regime of \cref{tab:h4} (same seed, $n=2000$, $200$
repetitions), all three full-least-squares fits return signal MSE $0.000088$,
agreeing to a maximum pairwise difference of $4.5\times10^{-19}$
(\cref{tab:deconfound}), whereas the diagonal projections return the
\cref{tab:h4} values $0.015948$ in the matched VWK basis and $0.516133$ in the
variance-matched Gaussian basis, a diagonal ratio $W/V=32.36$. Full least squares
therefore sits roughly $181\times$ below even the matched diagonal projection,
which establishes that the $W/V$ gap is an estimator-in-a-mismatched-basis effect
rather than a deficiency of the degree-${\le}\,2$ span. At memory $d=1$ the empirical-moment
VWK diagonal projection coincides numerically with ordinary least squares over
the same span, so in the single-lag case the matched basis recovers the
full-projection estimate exactly.

\begin{table}[t]
\centering
\caption{De-confounding the $W/V$ ratio on the centered-exponential regime
($n=2000$, $200$ repetitions, additive noise standard deviation $0.25$). Diagonal
projection is coordinate-wise; full least squares fits the whole span. Values are
signal MSE; the three full-least-squares bases agree to $4.5\times10^{-19}$.}
\label{tab:deconfound}
\small
\begin{tabular}{@{}lr@{}}
\toprule
Estimator over the degree-${\le}\,2$ span & Signal MSE \\
\midrule
Diagonal projection, matched VWK basis $\psi_k$            & 0.015948 \\
Diagonal projection, variance-matched Gaussian basis $g_k$ & 0.516133 \\
Full least squares, VWK / Wiener / monomial basis          & 0.000088 \\
\bottomrule
\end{tabular}
\end{table}

The correct reading follows. Matched VWK does not improve on full least squares
for prediction; its contribution is to restore the validity of the inexpensive
diagonal cross-correlation estimator under an asymmetric input law, so that the
coordinate-wise Lee--Schetzen coefficients are unbiased rather than corrupted by
basis mismatch (\cref{prop:penalty}), and to supply the well-conditioned
coordinate system whose regularization behavior is analyzed in
\cref{subsec:conditioning}.

\subsection{Basis conditioning}
\label{subsec:conditioning}

Full least squares is basis-invariant: it fits the same span and returns the
identical predictor whether the features are expressed in monomial, Wiener, or
matched coordinates (\cref{subsec:deconfound}). The durable, regime-specific
advantage of the matched basis is therefore not approximation power but
\emph{conditioning}. The monomial (power) features carry the moment Gram
operator $H_s=(m_{i+j})_{0\le i,j\le s}$, whose condition number grows
geometrically with the polynomial order $s$; the matched $\VWK$ basis carries,
by construction, the Gram $T_s H_s T_s^{\top}=I$, so its condition number is
unity at every order.

\Cref{tab:conditioning} reports $\mathrm{cond}(H_s)$ for the population power
basis, the population $\VWK$ Gram (unity by construction), and---to answer the
estimation-relevant question directly---the median condition number of the
\emph{empirical} $\VWK$ Gram formed from the realized $n=2000$ design over $40$
repetitions. For the centered-exponential law the power Hankel condition number
climbs from $23$ at $s=2$ to $1.63\times10^{9}$ at $s=6$, an
eight-order-of-magnitude deterioration over four degrees. The empirical $\VWK$
Gram is not itself the identity: it departs from unity and, for the
centered-exponential law, degrades from $1.3$ at $s=2$ to $1.1\times10^{4}$ at
$s=6$, since the fourth-to-$4s$ input moments that stabilize the empirical Gram
converge slowly under heavy tails. The durable advantage is therefore not that
the empirical matched Gram is perfectly conditioned, but that it remains four to
five orders of magnitude better conditioned than the power basis at every order
(at $s=6$, median $1.1\times10^{4}$ against $1.6\times10^{9}$). The gap holds
across regimes: at $s=5$ the power Hankel condition number is
$4.90\times10^{3}$ (Gaussian), $1.39\times10^{3}$ (uniform), and
$3.44\times10^{5}$ (contaminated), against empirical $\VWK$ medians of
$3.2$, $1.2$, and $11.4$ respectively.

\begin{table}[t]
\centering
\caption{Basis conditioning: condition number of the population monomial (power)
Hankel Gram $H_s$, the population matched $\VWK$ Gram (the identity by
construction), and the median condition number of the \emph{empirical} $\VWK$
Gram over $40$ repetitions at $n=2000$. The full order sweep is shown for the
centered-exponential law; the remaining regimes are anchored at $s=5$.}
\label{tab:conditioning}
\begin{tabular}{@{}llrrr@{}}
\toprule
Regime & $s$ & Power $\mathrm{cond}(H_s)$ & Pop.\ \VWK & Emp.\ \VWK (med.) \\
\midrule
Centered exponential & 2 & $23$                & $1.000$ & $1.3$ \\
                     & 3 & $9.90\times10^{2}$  & $1.000$ & $2.6$ \\
                     & 4 & $7.25\times10^{4}$  & $1.000$ & $11.9$ \\
                     & 5 & $8.72\times10^{6}$  & $1.000$ & $4.0\times10^{2}$ \\
                     & 6 & $1.63\times10^{9}$  & $1.000$ & $1.1\times10^{4}$ \\
\midrule
Gaussian             & 5 & $4.90\times10^{3}$  & $1.000$ & $3.2$ \\
Uniform              & 5 & $1.39\times10^{3}$  & $1.000$ & $1.2$ \\
Contaminated         & 5 & $3.44\times10^{5}$  & $1.000$ & $11.4$ \\
\bottomrule
\end{tabular}
\end{table}

The conditioning gap propagates to estimation under regularization, but only
under a \emph{common} penalty scale. \Cref{tab:ridge} contrasts two protocols on $40$ repetitions
($n_{\text{train}}=2800$, $n_{\text{test}}=1200$). Under a fixed shared relative
ridge penalty $\eta=10^{-2}$ (with $\lambda=\eta\,\mathrm{tr}(G)/(s+1)$), the
prediction RMSE in the power basis exceeds that in the $\VWK$ basis by factors of
$7.4$ to $25.5$, because a single penalty scale mis-shrinks the anisotropic power
coordinates. Under per-basis cross-validated $\lambda$, however, the advantage
essentially disappears: the power/$\VWK$ RMSE ratio falls to between $0.58$ and
$1.42$, and on two configurations the power basis is slightly better.
The advantage is therefore narrow: the matched basis provides robustness to
\emph{naive or shared} regularization and a well-scaled coordinate system, not a
lower achievable RMSE once each basis is tuned; a practitioner who tunes $\lambda$
per basis recovers comparable prediction error either way, because the underlying
span---and hence the unpenalized least-squares fit---is identical
(\cref{subsec:deconfound}).

\begin{table}[t]
\centering
\caption{Ridge prediction RMSE ratio (power basis over matched $\VWK$ basis) under
two regularization protocols, $40$ repetitions, $n_{\text{train}}=2800$. Under a
common shared penalty $\eta=10^{-2}$ the matched basis is $7$--$26\times$ better;
under per-basis cross-validated $\lambda$ the ratio collapses toward one, showing
the fixed-$\eta$ advantage is a penalty-scaling artifact rather than a lower
achievable error.}
\label{tab:ridge}
\small
\begin{tabular}{@{}llrr@{}}
\toprule
Regime & $s$ & Fixed $\eta$: power/\VWK & CV-tuned: power/\VWK \\
\midrule
Gaussian             & 3 & $7.37$  & $1.05$ \\
                     & 5 & $25.52$ & $1.00$ \\
Centered exponential & 3 & $12.40$ & $1.14$ \\
                     & 5 & $9.94$  & $0.58$ \\
Uniform              & 3 & $10.65$ & $0.94$ \\
                     & 5 & $15.14$ & $1.06$ \\
Contaminated         & 3 & $14.89$ & $0.95$ \\
                     & 5 & $17.78$ & $1.42$ \\
\bottomrule
\end{tabular}
\end{table}

One caveat bounds the construction's reach: raw-moment Gram--Schmidt is itself
ill-conditioned for large $s$ \citep{Gautschi1982}, so we restrict $s$ to the
reported range. With exact population moments the $\VWK$ Gram is stable to machine
precision (orthogonality error $\le 1.4\times10^{-12}$ up to $s=6$); the
finite-sample degradation of the empirical $\VWK$ Gram reported in
\cref{tab:conditioning} is the separate, sampling-driven effect and is the binding
limit in practice. For higher degree a Stieltjes / three-term recurrence is the
standard remedy for both.

\subsection{Finite-memory empirical moments}
\label{subsec:finite-memory}

The second synthetic gate uses a second-order finite-memory system with
$d=3$, $150$ repetitions, $n=2500$, and the same noise level. The VWK estimator is
run both with oracle population moments and with empirical moments estimated
from the input sample. The centered-exponential gap persists under both modes.

\begin{sidewaystable}[t]
\centering
\caption{Finite-memory empirical-moment production grid with $d=3$ and $s=2$.}
\label{tab:finite-memory}
\scriptsize
\setlength{\tabcolsep}{2pt}
\begin{tabular}{@{}lrrrrrrr@{}}
\toprule
Regime & Oracle VWK & Emp. VWK & Wiener & LS & Huber & $W$/oracle & $W$/emp. \\
\midrule
Gaussian control & 0.013221 & 0.009674 & 0.013221 & 0.000254 & 0.000264 & 1.000 & 1.367 \\
Centered exponential skew & 0.050598 & 0.037156 & 0.680900 & 0.000252 & 0.000262 & 13.457 & 18.325 \\
Uniform platykurtic & 0.008060 & 0.005658 & 0.008060 & 0.000247 & 0.000260 & 1.000 & 1.425 \\
Symmetric contaminated & 0.051937 & 0.032785 & 0.051937 & 0.000245 & 0.000254 & 1.000 & 1.584 \\
\bottomrule
\end{tabular}
\end{sidewaystable}

Two features of the empirical-moment column can mislead. First, the
empirical VWK risk is below the oracle VWK risk in every regime, and the
Gaussian control shows $W/\mathrm{empirical}=1.367$ even though it is neutral
under oracle moments ($W/\mathrm{oracle}=1.000$). Neither is a skew effect: the
empirical basis is orthogonalized against the \emph{realized} design, so at
memory $d=1$ it reduces to ordinary least squares in the same span and enjoys a
generic in-sample fitting advantage over any fixed population basis, the Gaussian
control included. The neutrality claim of \cref{prop:penalty} is therefore an
oracle-moment statement; the empirical column measures a different, favorable
finite-sample artifact and must not be read as the penalty. Second, these rows
are not evidence that empirical moments dominate oracle moments;
the estimators are evaluated in a finite-sample projection experiment, and the
skew-driven separation that the penalty predicts is the centered-exponential
column ($W/\mathrm{oracle}=13.457$), which persists under both moment modes.

\subsection{Sample efficiency}
\label{subsec:sample-efficiency}

The sample-efficiency curve repeats the finite-memory experiment with sample
sizes $600$, $1200$, $2500$, and $5000$, using $80$ repetitions per point. The minimum
centered-exponential ratios are $W/\mathrm{oracle}=4.164$ and
$W/\mathrm{empirical}=6.198$, while the Gaussian control has
$W/\mathrm{oracle}=1.000$ at all sample sizes. The shaded bands are $95\%$
percentile bootstrap confidence intervals over the $80$ replications,
resampling replication indices jointly so the paired Wiener and VWK draws stay
together. The skew bands lie entirely above the neutral ratio of one at every
sample size---the oracle-moment lower endpoints rise from $3.50$ at $n=600$ to
$23.04$ at $n=5000$---so the separation is not a sampling artifact; the
Gaussian-control band is degenerate at one because the variance-matched Wiener
basis coincides with the VWK oracle basis replicate by replicate.

\begin{figure}[t]
\centering
\includegraphics[width=0.82\linewidth]{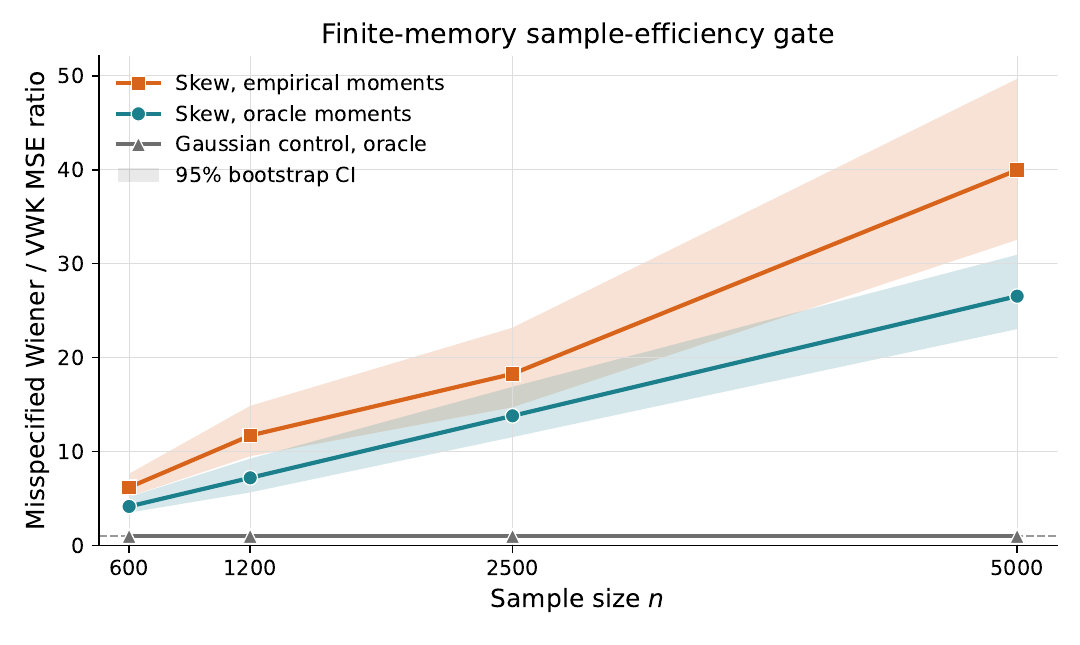}
\caption{Sample-efficiency curve for the finite-memory experiment. The skew
regime shows a persistent misspecified-Wiener penalty, whereas the Gaussian
control remains neutral for oracle moments. Shaded regions are $95\%$ bootstrap
confidence bands for the ratio of means; the Gaussian-control band collapses to
the line because its Wiener and VWK estimators coincide replicate by replicate.}
\label{fig:sample-efficiency}
\end{figure}

\FloatBarrier
\section{Real-world diagnostic screen}
\label{sec:realworld}

\paragraph{Problem setting.}
The controlled experiments above use known finite-memory polynomial signals. A
separate diagnostic question is whether the same projection effect can be seen
in ordinary regression data, where the true data-generating mechanism is unknown
and no Volterra kernel is available for recovery. To keep the question narrow,
we do not compare full predictors or tune model classes. For each prepared
scalar target we compare two coordinate-wise order-two projections over the same
one-driver polynomial envelope: the distribution-matched VWK projection and the
variance-matched Gaussian/Wiener projection. The diagnostic ratio is
$W/V$, the Wiener diagonal-projection MSE divided by the matched VWK
diagonal-projection MSE; values above one indicate a penalty for keeping the
diagonal estimator in the Gaussian basis.

\paragraph{Datasets and protocol.}
The targets are drawn from problem classes where non-Gaussian, heavy-tailed
regressors are the norm in signal processing: the SRU sulfur-recovery soft-sensor
benchmark \citep{Fortuna2007}, a canonical process-monitoring series; the UCI
gas-turbine CO/NO$_x$ emission-monitoring data \citep{Kaya2019GasTurbine}; the UCI
concrete compressive-strength data; and the French motor-insurance freMTPL2
severity data. Each screen row is a single scalar target regressed on a single
driver over the order-two envelope $\{1,x,x^2\}$; when a target admits several
candidate regressors the driver is fixed, before any VWK/Wiener comparison, as the
one minimizing the order-two training MSE. Target and driver are standardized using
\emph{training-fold} moments only, and the series is split $70/30$ into a leading
training fold and a held-out test fold with no shuffling, so that the
soft-sensor and emission series keep their temporal order. On the training fold we
fit four order-two estimators on the same design: the matched VWK diagonal
projection, the variance-matched Wiener diagonal projection, an unconstrained
full-span least-squares (LS) fit, and a Huber $M$-estimator ($\delta=1.345$). All
numbers reported are \emph{held-out test} MSE. A row is pre-registered as a
candidate illustration only when it clears two gates jointly: $W/V\ge1.2$
\emph{and} the matched projection stays within $25\%$ of the better guardrail,
$\mathrm{VWK}\le1.25\min(\mathrm{LS},\mathrm{Huber})$.

To avoid selective reporting, \cref{tab:realworld} lists all $11$ screened rows,
sorted by $W/V$; four clear both gates, one is the strongest-skew counterexample
($W/V<1$), and the remaining six are near-neutral ($0.97\le W/V\le1.05$). The
skew and excess-kurtosis columns are standardized moments of the training-fold
target residual after the order-two LS projection, not moments of the scalar
driver; for \texttt{fremtpl2\_severity\_raw} the skew $32.6$ is that of the
claim-amount target, while the driver \texttt{DrivAge} is the bounded regressor
defining the polynomial envelope.

\begin{sidewaystable}[t]
\centering
\caption{Complete real-world scalar screen, all $11$ prepared targets sorted by
$W/V$. Skew and excess kurtosis are standardized moments of the training-fold
target residual. The VWK, Wiener, and Huber columns are held-out test MSE; a
full-span least-squares fit reproduces the VWK column to machine precision (see
text) and is omitted. $W/V>1$ favors the matched VWK diagonal projection over the
variance-matched Wiener projection. Rows above the rule with $W/V>1.2$ that also
stay within $25\%$ of the Huber guardrail are candidate illustrations;
\texttt{fremtpl2\_severity\_raw} is the strongest-skew counterexample. These rows
are diagnostic only; the data carry no known Volterra kernels.}
\label{tab:realworld}
\scriptsize
\setlength{\tabcolsep}{2pt}
\begin{tabular}{@{}llrrrrrrr@{}}
\toprule
Dataset & Driver & $n$ & Skew & Exc.\ kurt. & VWK MSE & Wiener MSE & Huber MSE & $W/V$ \\
\midrule
\texttt{sru\_y1\_dynamic}           & \texttt{y1\_lag1} & 10079 & 10.557 & 274.212  & 0.133954 & 0.548277 & 0.132697 & 4.093 \\
\texttt{gas\_turbine\_co\_all\_raw} & \texttt{TIT}      & 36733 &  7.893 & 159.369  & 0.410362 & 1.058093 & 0.415731 & 2.578 \\
\texttt{gas\_turbine\_co\_2015\_raw}& \texttt{TIT}      &  7384 &  9.912 & 197.398  & 0.275787 & 0.454380 & 0.329211 & 1.648 \\
\texttt{sru\_y2\_dynamic}           & \texttt{y2\_lag1} & 10079 &  5.095 & 174.857  & 0.060072 & 0.087139 & 0.060441 & 1.451 \\
\texttt{sru\_y1\_static}            & \texttt{u2}       & 10081 &  7.553 &  95.701  & 1.404603 & 1.472416 & 1.267864 & 1.048 \\
\texttt{concrete}                   & \texttt{age}      &  1030 &  0.591 &  -0.226  & 0.545752 & 0.567897 & 0.523674 & 1.041 \\
\texttt{fremtpl2\_severity\_log}    & \texttt{Exposure} &  4000 & -0.392 &   3.195  & 0.828967 & 0.830935 & 0.829046 & 1.002 \\
\texttt{gas\_turbine\_nox\_all\_raw}& \texttt{AT}       & 36733 &  0.966 &   2.727  & 1.177836 & 1.166697 & 1.105880 & 0.991 \\
\texttt{gas\_turbine\_nox\_2015\_raw}& \texttt{AT}      &  7384 &  0.727 &   6.673  & 1.082969 & 1.064592 & 1.081131 & 0.983 \\
\texttt{sru\_y2\_static}            & \texttt{u3}       & 10081 &  3.019 &  29.072  & 0.955207 & 0.930168 & 0.910347 & 0.974 \\
\midrule
\texttt{fremtpl2\_severity\_raw}    & \texttt{DrivAge}  &  4000 & 32.613 & 1148.301 & 0.078410 & 0.074435 & 0.071228 & 0.949 \\
\bottomrule
\end{tabular}
\end{sidewaystable}

\paragraph{Guardrails: not underfitting, not merely robust regression.}
Two comparisons keep the positive rows honest. First, the matched VWK diagonal
projection reproduces the unconstrained full-span least-squares fit on
$\{1,x,x^2\}$ to machine precision on every row (test-MSE ratio $1.000$), which is
the finite-sample image of \cref{thm:mapping}: the diagonal estimator in the
matched orthonormal basis \emph{is} the least-squares projection onto the same
span. The VWK advantage over the Wiener column is therefore not an artifact of a
weaker or differently regularized fit\,---\,both estimators use identical training
data and the same order-two span, and VWK equals the optimal in-span fit. Second,
a Huber $M$-estimator answers the natural objection that any gain on heavy-tailed
data is simply robust regression: on the four flagged rows VWK matches or beats
Huber, tying it to within $1\%$ on \texttt{sru\_y1\_dynamic} and improving on it
by $1.3\%$ (\texttt{gas\_turbine\_co\_all\_raw}) up to $16\%$
(\texttt{gas\_turbine\_co\_2015\_raw}). The gain is thus a coordinate-geometry
effect, distinct from downweighting outliers.

\paragraph{Interpretation and scope.}
The screen gives a limited external check on the mechanism, not a claim of
field-data system identification. The positive rows show that a sizable
diagonal-projection gap can occur in non-synthetic regression targets:
\texttt{sru\_y1\_dynamic} reaches $W/V=4.093$, and the gas-turbine carbon-monoxide
targets give ratios $2.578$ and $1.648$. The six near-neutral rows and the
counterexample bound the claim. The most strongly skewed target,
\texttt{fremtpl2\_severity\_raw}, has residual skew $32.6$ but gives $W/V=0.949$,
so the matched diagonal projection does not win there; conversely the four
winners do not have the largest marginal skew. Thus residual skewness and
kurtosis are screening variables, not sufficient conditions for improvement. The
counterexample also marks the scope boundary the guardrails expose: on that
extreme insurance tail (residual skew $32.6$, excess kurtosis $1148$) the Huber
fit is the best of the four estimators, so where the second-moment geometry is
dominated by rare extreme claims the appropriate tool is robust $M$-estimation,
not a moment-matched projection.

This reading is consistent with \cref{prop:penalty} and \cref{subsec:deconfound}:
the diagonal-projection gap depends on the order-two moment geometry of the
projected signal, not on a marginal skewness statistic alone. Because these data
contain no known Volterra kernels, the screen is evidence that the proposed
misspecification mechanism can be observed outside synthetic examples; it is not
evidence for real-world kernel recovery or universal predictive dominance.

\FloatBarrier
\section{Discussion}
\label{sec:discussion}

The mathematical result is geometric. VWK replaces a fixed Gaussian coordinate
system by the orthogonal coordinate system generated by the actual input law.
This explains why the Gaussian control is neutral: when the Wiener assumption is
correct, the VWK basis and Wiener--Hermite basis coincide. It also explains why
the gain appears for asymmetric finite-moment input rather than for generic
non-Gaussianity, exactly as \cref{prop:penalty} predicts through
$\delta=\mu_3/\sigma^2$. In the centered-exponential regime, the order-two VWK basis
contains the skew correction needed to align the projection; in the symmetric
uniform and contaminated regimes used here, the same projection mismatch is not
observed.

The direct monomial least-squares and Huber baselines matter. Because monomial and
VWK coefficients span the same finite polynomial space, a full direct span fit can
perform very strongly in prediction error (\cref{subsec:deconfound}). VWK is
therefore not a universal prediction winner. Its role is to provide an orthogonal,
distribution-matched coordinate system in which the inexpensive diagonal
cross-correlation estimator is unbiased (\cref{prop:penalty}) and in which
regularization and finite-sample errors are decoupled from the ill-conditioned
power basis (\cref{subsec:conditioning}).

This reading also clarifies the connection to PMM and DSGE work in the broader
program. PMM2-like skew corrections are natural low-order consequences of
orthogonalization under an asymmetric law. DSGE-style reconstruction-error
features benefit from matched bases primarily when regularization interacts with
the conditioning of the basis. These are consequences of the geometry, not
separate claims of unconditional estimator dominance.

Several limitations are structural rather than merely numerical. First, the
proof assumes a product input law $P^d$; correlated inputs require
Gram--Schmidt in the joint lag law. Second, the theorem is finite-memory and
discrete; continuous-time Volterra kernels require additional measure-theoretic
work. Third, the present construction is moment-based; distributions without
finite moments require a separate characteristic-function treatment. Finally,
Lean currently verifies selected instances---low-order Hermite and the
arbitrary-$N$ Krawtchouk row---not the full VWK theorem.

\section{Conclusion}
\label{sec:conclusion}

This paper establishes a finite-memory moment-based VWK construction. Under finite
moments and monomial linear independence, oriented Gram--Schmidt in $L^2(P)$
gives the unique distribution-matched orthonormal polynomial basis. Tensor
products of this basis provide a finite-memory Volterra expansion, and monomial
Volterra coefficients and VWK coefficients are related by an invertible linear
change of basis. Wiener--Hermite appears as the Gaussian row, while classical
Askey families are closed instances for their corresponding laws.

The reproducible experiments support a narrow but strong empirical reading:
matched VWK projection removes a large misspecified Gaussian/Wiener projection
penalty in asymmetric finite-moment regimes, and this effect persists in a
second-order three-lag finite-memory benchmark. The same experiments do not
support a symmetric non-Gaussian advantage, a real-world kernel-recovery claim,
or universal superiority over direct robust span-fitting baselines.

The contribution is therefore not the distribution-matched basis, which is
arbitrary polynomial chaos, but the closed-form penalty for using the mismatched
Gaussian basis instead (\cref{prop:penalty}), the conditioning advantage that
follows (\cref{subsec:conditioning}), and a machine-checked instance of the
matched-basis recovery (\cref{prop:krawtchouk}). In this sense, VWK turns the
Wiener--Hermite cross-correlation estimator from a Gaussian-specific procedure
into a distribution-matched one with a quantified misspecification cost.

\bibliography{references}

\end{document}